\documentclass[11pt]{article}

\usepackage[utf8]{inputenc}
\usepackage[T1]{fontenc}
\usepackage{amsmath,amssymb,amsthm}
\usepackage{mathtools}
\usepackage[margin=1.1in]{geometry}
\usepackage{hyperref}
\usepackage{algorithm}
\usepackage{algpseudocode}
\usepackage{xcolor}
\usepackage{authblk}

\theoremstyle{plain}
\newtheorem{theorem}{Theorem}[section]
\newtheorem{proposition}[theorem]{Proposition}
\newtheorem{lemma}[theorem]{Lemma}
\newtheorem{corollary}[theorem]{Corollary}

\theoremstyle{definition}
\newtheorem{definition}[theorem]{Definition}

\DeclareMathOperator{\GL}{GL}
\DeclareMathOperator{\Mon}{Mon}
\DeclareMathOperator{\Aut}{Aut}

\DeclareMathOperator{\rank}{rank}
\newcommand{\F}{\mathbb{F}}
\newcommand{\Sn}{\mathcal{S}_n}
\newcommand{\C}{\mathcal{C}}
\newcommand{\K}{\mathcal{K}}
\newcommand{\diag}{\operatorname{diag}}
\algrenewcommand\algorithmicrequire{\textbf{Input:}}
\algrenewcommand\algorithmicensure{\textbf{Output:}}
\algnewcommand\algorithmicbreak{\textbf{break}}
\algnewcommand\Break{\algorithmicbreak}
\DeclareMathOperator{\rref}{rref}
\DeclareMathOperator{\piv}{piv}
\newcommand{\NO}{\textsc{no}}
\newcommand{\YES}{\textsc{yes}}

\newcommand{\new}[1]{\textcolor{black}{#1}}
\newenvironment{gce}{\begingroup\color{black}}{\endgroup}

\title{Search-to-Decision Reductions for the\\ Linear \new{and General} Code Equivalence Problems}

\author{Abhinaba Mazumder}
\affil{University of Zurich\\ Email: abhinaba.mazumder@math.uzh.ch}

\begin{document}
\maketitle

\begin{abstract}
In this paper, we present efficient search-to-decision reductions for the Linear Code Equivalence (LCE) and Generalised Code Equivalence (GCE) problems. Our methodology is inspired by the recent search-to-decision reduction for Permutation Code Equivalence (PCE) in \cite{BM23}. We demonstrate how to recover the permutation component of the equivalence using a decision oracle, and subsequently show one way of recovering the diagonal and field automorphism components in deterministic polynomial time by leveraging the elegant Engel-Schneider algorithm for diagonal equivalence.
\end{abstract}

\section{Introduction}

The Code Equivalence (CE) problem fundamentally asks whether two linear codes are identical up to a specified set of metric-preserving transformations. It is one of a class of several equivalence problems, similar philosophically to graph isomorphism, and lattice isomorphism, for instance. Depending on the allowed transformations, this question yields three main variants: Permutation Code Equivalence (PCE), Linear Code Equivalence (LCE), and Generalised Code Equivalence (GCE). The assumed computational hardness of these problems has made them a promising foundation for post-quantum cryptographic schemes, most notably the Linear Equivalence Signature Scheme (LESS) \cite{LESS20} submitted to the NIST PQC competition.

Recently, a search-to-decision reduction for PCE was introduced in \cite{BM23}. However, extending this reduction to the linear and generalized cases remained an open problem, as mentioned in \cite{BM23}, and as highlighted in a recent survey on the topic \cite{survey24}. In this paper, we resolve these open problems, showing that search-to-decision reductions for both LCE and GCE naturally follow from an adaptation of the PCE framework.

The remainder of the paper is organized as follows. Section \ref{sec:background} provides the necessary background and formal definitions. Section \ref{sec:main} details the first step of our reduction, where we recover the permutation. Section \ref{sec:diagonal} explains how to efficiently recover the diagonal scaling component using the Engel-Schneider algorithm. Section \ref{sec:general} extends these results to solve the Generalised Code Equivalence problem. Finally, Section \ref{sec:conclusion} concludes the paper and discusses future work.

In this paper, we prove the following statement.

\begin{theorem}\label{thm:main}
There is an efficient reduction from search-LCE to LCE. 
\end{theorem}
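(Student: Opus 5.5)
The plan follows the two-phase structure announced in the introduction. Let the input be generator matrices $G, G' \in \F_q^{k\times n}$ for codes $\C, \C'$ promised to be linearly equivalent. That is, there exist a permutation $\pi \in \Sn$ and a diagonal matrix $D=\diag(d_1,\dots,d_n)$ with $d_i\in\F_q^*$ such that $\C' = \C\cdot D P_\pi$, where $P_\pi$ is the permutation matrix of $\pi$ and $DP_\pi \in \Mon_n(\F_q)$. In the first phase I recover a permutation $\sigma$ such that $\C\cdot P_\sigma$ and $\C'$ are \emph{diagonally} equivalent, using only the LCE decision oracle. In the second phase I recover the diagonal part deterministically, without the oracle, via the Engel--Schneider algorithm.

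\textbf{Phase 1: recovering the permutation.} I adapt the BM23 technique for PCE, which fixes the image of each coordinate one at a time by appending distinguishing gadgets. To test whether coordinate $i$ of $\C$ can be mapped to coordinate $j$ of $\C'$ by some monomial equivalence, I build extended codes $\C_i$ and $\C'_j$. The gadget appends to both codes a block of fresh coordinates that rigidly ``marks'' position $i$ in $\C$ and position $j$ in $\C'$. Concretely, I append $m$ (say $m>n$) repeated copies of coordinate $i$ (respectively $j$), together with additional redundancy, so that the marked coordinate becomes distinguishable. Unlike the permutation case, repetitions are only preserved up to scalars under monomial maps. So the gadget should be chosen to be invariant under scaling: a scalar multiple of a column is again a marked column.

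I then argue both directions. \textbf{Soundness:} if $\C_i \sim \C'_j$ under a monomial map, then the large cluster of proportional columns must map onto the corresponding cluster. Provided $m$ exceeds the size of any other proportional-column class, which I ensure by first deleting or normalising repeated columns of $\C$ and $\C'$, coordinate $i$ is forced to map to $j$ up to a scalar. \textbf{Completeness:} any monomial equivalence with $\pi(i)=j$ extends to the gadget. Iterating over $j$ with at most $n$ oracle calls per coordinate, I fix images sequentially. Marks for already-fixed pairs are kept, using distinct cluster sizes $m_1 > m_2 > \dots$, so that consistency is maintained. This gives $O(n^2)$ oracle calls on codes of polynomial length.

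\textbf{Phase 2: the diagonal.} Once $\sigma$ is known, replace $G$ by $G P_\sigma$. It remains to find $D$ with $\C\cdot D = \C'$. Following Section~\ref{sec:diagonal}, I reduce this to a diagonal-equivalence (matrix scaling) question. Let $H'$ be a parity-check matrix of $\C'$. Then $\C D = \C'$ if and only if $G D H'^{\top} = 0$. This is a \emph{linear} system in the unknowns $d_1,\dots,d_n$, so I can solve it directly by Gaussian elimination and select a solution with all $d_i\neq 0$. The subtlety is that the solution space may contain vectors with zero entries. Here Engel--Schneider-style structure helps: the solution space decomposes along the connected components of the bipartite support graph, on each of which the scaling is determined up to one scalar. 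Choosing a nonzero scalar per component then gives an invertible $D$.

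I expect the main obstacle to be the gadget design in Phase 1. Scaling freedom may let the monomial map send the gadget block to non-gadget coordinates that happen to be proportional to one another. The proof must therefore show that the marked proportional-column class is unique, which requires the preprocessing of column classes and a careful counting argument.
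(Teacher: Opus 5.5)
\textbf{The gap is in Phase~1}, at exactly the point you flag as the main obstacle, and the remedy you propose does not work.

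Deleting proportional columns is not a valid preprocessing step. It forgets the class multiplicities, so the deduplicated pair can be equivalent when $(G,F)$ is not, and a permutation found for it need not lift. For $G=F=(e_1,e_1,e_2,e_3)$, the deduplicated codes $(e_1,e_2,e_3)$ admit an equivalence exchanging the first two coordinates. That would match the class of $e_1$ (size two in $G$) with a class of size one, which is impossible by Lemma~\ref{lem:invariance}.

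So classes of size larger than one must be kept. Then your soundness claim, that $i$ is forced to map to $j$ ``up to a scalar'', is false. The size count only shows that the marked class $K\cup B$ of the extended $G$ goes onto $L\cup B'$. The index $i$ may land anywhere in $L\cup B'$, and appended positions may land on original coordinates. Restricting to the first $n$ coordinates then need not even give a monomial matrix.

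The missing idea is the normalisation lemma (Lemma~\ref{lem:normalise}). Permuting coordinates inside a projective class is a monomial automorphism, $GQ=G\Delta$. Hence any equivalence $SGM=F$ can be modified, with the same $S$ and a new diagonal, to realise an arbitrary bijection between matched classes. Only with this can you choose $K\cup B\to L\cup B'$ to be the identity on the gadget and to send $i\mapsto j$, and then restrict to an equivalence of the original codes.

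The same issue breaks your consistency step. If every fixed pair keeps its own mark, two fixed indices in one class produce a single merged class of size $c+m_s+m_{s'}$. With natural polynomial choices such as $m_s=sM$, these sums can equal the size of another marked class ($M+2M=3M$). An equivalence of the extended codes may then exchange marked classes, and a \YES{} answer no longer certifies agreement with earlier commitments. Superincreasing sizes restore uniqueness, but only at exponential length.

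The paper avoids this as follows:
\begin{itemize}
\item It marks a single representative $t_j$ per class with $jm$ copies, where $m=m(G)$. The marked sizes $c_j+jm$ are then strictly increasing and exceed every unmarked class.
\item It commits each later index of an already represented class without any oracle call. A free target is guaranteed by Lemma~\ref{lem:normalise} together with the count $\pi(A_{T-1})\cap L_j=\pi(A_{T-1}\cap K_j)$ of Lemma~\ref{lem:consistency}.
\end{itemize}

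Your Phase~2 is an acceptable variant of the Engel--Schneider step. Under the promise, passing to systematic form shows that the solution space of $GDH'^{\top}=0$ is a direct sum of one-dimensional pieces supported on the components of the bipartite graph. A full-weight solution can then be read off, for instance from the reduced echelon basis of that space. You should prove this decomposition rather than assert it.
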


The reduction consists of two steps.

\begin{itemize}
\item Using the oracle for LCE, determine a permutation $\pi\in\Sn$ which is realised as a part of some monomial equivalence between $G$ and $F$. 
\item Given such a $\pi$, determine the diagonal matrix. (no oracle in this step)
\end{itemize}

Analogous to the multiset of repeated columns in \cite{BM23}, we use the partition of the coordinates into projective classes. Essentially ``identical columns'' is replaced everywhere by ``proportional columns'', and the exact strategy of \cite{BM23} can be carried out. As far as possible, we keep the exposition of \cite{BM23}, so that the two arguments can be read side by side. 

\begin{gce}
We then show, in Section \ref{sec:general}, that the same argument naturally settles the most general variant of the problem, in which the equivalence also involves a field automorphism.

\begin{theorem}\label{thm:general}
There is an efficient reduction from search-GCE to GCE.
\end{theorem}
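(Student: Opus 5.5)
The plan is to follow the two-step structure used for Theorem~\ref{thm:main} and add the field automorphism as a third component. It can be handled almost for free because $\Aut(\F_q)$ is small.

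Write $q=p^e$. Then $\Aut(\F_q)=\{\sigma_i : x\mapsto x^{p^i}\}_{0\le i<e}$ has only $e\le \log_2 q$ elements. A GCE instance asks whether there exist $S\in\GL_k(\F_q)$, a monomial $Q=P_\pi D$ and $\sigma\in\Aut(\F_q)$ with $F = S\,\sigma(G)\,Q$, where $\sigma$ acts entrywise. The key observation is that applying $\sigma$ entrywise to a generator matrix gives a generator matrix of the code $\sigma(\C)$. So for a fixed $\sigma$, the GCE instance $(G,F)$ with witness automorphism $\sigma$ is exactly the LCE instance $(\sigma(G),F)$.

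The reduction proceeds in the following steps.

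\begin{enumerate}
\item \textbf{Guess the automorphism.} Loop over all $e$ automorphisms $\sigma_i$ and try to find one for which $\sigma_i(G)$ and $F$ are linearly equivalent. We only have a GCE oracle, not an LCE oracle, so we do not query the oracle directly on $(\sigma_i(G),F)$. Instead we run the whole pipeline of Sections~\ref{sec:main} and~\ref{sec:diagonal} on $(\sigma_i(G),F)$, with the GCE oracle in place of the LCE oracle.
\item \textbf{Adapt the gadget argument.} Every oracle query in Section~\ref{sec:main} compares two codes obtained from $G$ and $F$ by appending gadget columns or coordinates. These gadgets are built from the projective class structure, and their defining property is that they force a chosen coordinate $j$ of one code to map to a chosen coordinate $j'$ of the other. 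We must check that this forcing still holds when equivalences may also apply a field automorphism. Concretely, $\sigma$ permutes projective points: it sends proportional columns to proportional columns and preserves the multiplicities of the projective classes. Therefore the class-size and multiplicity invariants used to pin coordinates are $\sigma$-invariant. The conclusion is that a GCE-YES answer on the augmented instances still certifies that some generalised equivalence maps $j$ to $j'$. Running the iterative fixing of Section~\ref{sec:main} then recovers a permutation $\pi$ that is part of some generalised equivalence $(S,\pi,D,\sigma^*)$, where $\sigma^*$ is not yet known.
\item \textbf{Recover $D$ and $\sigma$ without the oracle.} With $\pi$ fixed, loop over $\sigma_i$. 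For each one, apply the deterministic Engel--Schneider diagonal-equivalence step of Section~\ref{sec:diagonal} to $\sigma_i(G)P_\pi$ versus $F$. By the previous step, some $\sigma_i$, namely the true $\sigma^*$, yields a solution $D$. Whenever a candidate $D$ is produced, verify it by checking that $\sigma_i(G)P_\pi D$ and $F$ have equal row spaces, for instance by comparing reduced row echelon forms, and output $(\sigma_i,\pi,D)$ together with $S$ found by linear algebra.
\end{enumerate}

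The total cost is $e$ times a polynomial, and $e=O(\log q)$, so the reduction is polynomial.

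The main obstacle is the adaptation in the second step. The gadgets of Section~\ref{sec:main} may use specific field scalars, for example particular projective points chosen as fresh column classes. A field automorphism could map such a gadget column to a different projective point, and the forcing argument could break. My first attempt would be to choose gadget columns over the prime field $\F_p$, for example $0/1$ patterns or unit-vector extensions. These are fixed by every $\sigma$, so the gadget is preserved setwise, and only the permutation and scaling parts act on it, exactly as in the LCE argument. If a gadget genuinely needs non-prime-field scalars, the fallback is to make it distinguishable by projective-class multiplicity alone, using a count not occurring elsewhere. Multiplicities are invariant under $\sigma$, so the counting argument of Section~\ref{sec:main} would go through unchanged.
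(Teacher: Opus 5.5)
Your proposal is correct and is essentially the paper's argument: you recover $\pi$ once by running the unchanged queries of Section~\ref{sec:main} against the GCE oracle, which works because field automorphisms preserve the partition into projective classes (Lemma~\ref{lem:frobclass}); you then enumerate $\Aut(\F_q)$, run the Engel--Schneider step on $(\alpha(G)P,F)$ for each candidate, and reject those that fail. Two of your hedges are unnecessary. The gadgets are just copies of existing columns, so $\alpha(G\,\|\,g_t^{(s)})=\alpha(G)\,\|\,\alpha(g_t)^{(s)}$ and no prime-field gadget is needed; and rerunning the oracle pipeline separately for each $\sigma_i$ in your first step gains nothing, because neither the queries nor the GCE oracle's answers depend on $\sigma_i$.
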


As many have noted, including \cite{BM23}, we essentially get GCE for free is we can solve LCE, given than we simply have to cycle through $\operatorname{log}_p(q)$ LCE instances. For our purposes, it is fortunate that a field automorphism respects projective classes (Lemma \ref{lem:frobclass}), and so the first step of the reduction remains the same. We then simply enumerate $\Aut(\F_q)$, which is cyclic of order $\log_p q$, and do the second step for each candidate.  
\end{gce}
\\
Implementations of the reductions can be found in \url{https://git.math.uzh.ch/projects/3368}.

\section{Background}\label{sec:background}

In this section, we fix the notation we will use for the rest of the paper, as well as providing the background we need for the reductions. An $(n,k,d)$-linear code $\C$ over $\F_q$ is a $k$-dimensional $\F_q$-subspace of $\F_q^n$. A matrix $G\in\F_q^{k\times n}$ is called a generator matrix of $\C$ if its rows span $\C$. The invertible matrices of dimension $k$ over $\F_q$ are denoted by $\GL_k(\F_q)$. Given a generator matrix $G$ of $\C$ and a change of basis matrix $S\in\GL_k(\F_q)$, the matrix $SG$ also generates $\C$. We denote by $g_i$ the $i$-th column of $G$. We also abuse notation and write, for instance, $(GQ)_j$ to denote the $j-$th column of the matrix $GQ$, and $(GQ)_{i, j}$ for the entry in row $i$ and column $j$.

To each permutation $\pi\in\Sn$ we associate the permutation matrix $P\in\{0,1\}^{n\times n}$ having a $1$ in entry $(i,j)$ if and only if $\pi(i)=j$. Thus, $\pi(v)=vP$. A matrix $M\in\F_q^{n\times n}$ is {monomial} if it has exactly one nonzero entry in each row and each column; equivalently $M = PD$ with $P$ a permutation matrix and $D=\diag(d_1,\dots,d_n)$, $d_j\in\F_q^*$. We write $\Mon_n$ for the group of monomial matrices and $\pi_M$ for the permutation attached to the permutation matrix of $M$. Finally, in pseudocode and elsewhere, $A\,\|\,u^{(s)}$ denotes that we append $s$ copies of $u$ to $A$.

\begin{definition}[Matrix version of LCE]\label{def:lce}
Let $\C_1,\C_2$ be two input linear codes over $\F_q$ with generator matrices $G,F$ respectively. LCE consists in deciding whether there exist $S\in\GL_k(\F_q)$ and $M\in\Mon_n$ such that $SGM=F$.
\end{definition}

We define by search-LCE the problem of finding such a pair $(S,M)$. As in \cite{BM23}, we say that such an $M$ {realises} the equivalence between $G$ and $F$.

\begin{gce}
For GCE, we require a field automorphism. If $q=p^e$ with $p$ prime, we recall that
\[
\Aut(\F_q) \;=\; \big\langle\, x\mapsto x^{p} \,\big\rangle \;\cong\; \mathbb{Z}/e , 
\]
is cyclic of order $e$, and generated by the Frobenius. In particular, it is trivial when $q$ is prime. Let $\alpha\in\Aut(\F_q)$ be the $s$-th power of the Frobenius map. For a matrix $A\in\F_q^{k\times n}$, by $\alpha(A)$ we denote the matrix obtained by applying $\alpha$ to each entry of $A$, so that $\alpha(A)_{ij}=A_{ij}^{p^s}$. The group of semilinear isometries of $\F_q^n$ is then $\Gamma_n:=\Mon_n\rtimes\Aut(\F_q)$, and we have the following problem.

\begin{definition}[Matrix version of GCE]\label{def:gce}
Let $\C_1,\C_2$ be two input linear codes over $\F_q$ with generator matrices $G,F$ respectively. GCE consists in deciding whether there exist $S\in\GL_k(\F_q)$, $M\in\Mon_n$ and $\alpha\in\Aut(\F_q)$ such that
\[
S\,\alpha(G)\,M \;=\; F .
\]
\end{definition}

We define by search-GCE the problem of finding such a triple $(S,M,\alpha)$. Note that since $\alpha$ is a ring homomorphism, $\alpha(GM)=\alpha(G)\alpha(M)$ and $\alpha(M)$ ranges over $\Mon_n$ as $M$ does.
\end{gce}

Some useful things to remember are as follows:

\begin{lemma}\label{lem:matrix}
Let $A\in\F_q^{k\times n}$, let $P$ be the permutation matrix of $\pi\in \mathcal{S}_n$ and let $D=\diag(d_1,\dots,d_n)$. Then:
\begin{enumerate}
\item[(i)] $(APD)_{j} = d_j\, a_{\pi^{-1}(j)}$ for all $j$;
\item[(ii)] $D P = PD^{\pi}$, where $D^{\pi} := \diag(d_{\pi^{-1}(1)},\dots,d_{\pi^{-1}(n)})$;
\item[(iii)] for $M_1,M_2\in\Mon_N$ the product $M_1M_2$ is monomial and $\pi_{M_1M_2} = \pi_{M_2}\circ\pi_{M_1}$.
\end{enumerate}
\end{lemma}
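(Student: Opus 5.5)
The statement is Lemma~\ref{lem:matrix}, a collection of elementary identities. I will prove each item by computing entries directly from the convention that $P$ has a $1$ in entry $(i,j)$ iff $\pi(i)=j$. That is, $P_{ij}=\delta_{\pi(i),j}$, and $D_{ij}=d_i\delta_{ij}$.

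For (i), I first compute $(AP)_{\ell j}=\sum_i A_{\ell i}\delta_{\pi(i),j}=A_{\ell,\pi^{-1}(j)}$. So the $j$-th column of $AP$ is $a_{\pi^{-1}(j)}$. Right multiplication by $D$ scales column $j$ by $d_j$, which gives $(APD)_j=d_j a_{\pi^{-1}(j)}$.

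For (ii), I compare entries of both sides. On the left, $(DP)_{ij}=d_i\delta_{\pi(i),j}$. On the right, $(PD^{\pi})_{ij}=\delta_{\pi(i),j}\,(D^\pi)_{jj}=\delta_{\pi(i),j}\,d_{\pi^{-1}(j)}$. Whenever the delta is nonzero we have $j=\pi(i)$, so $d_{\pi^{-1}(j)}=d_i$ and the two sides agree entrywise.

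For (iii), write $M_1=P_1D_1$ and $M_2=P_2D_2$. By (ii), $D_1P_2=P_2D_1^{\pi_2}$, so
\[
M_1M_2=P_1P_2\,D_1^{\pi_2}D_2 .
\]
This is a permutation matrix times an invertible diagonal matrix, hence monomial. It remains to identify the permutation of $P_1P_2$. From $(P_1P_2)_{ij}=\sum_l\delta_{\pi_1(i),l}\delta_{\pi_2(l),j}=\delta_{\pi_2(\pi_1(i)),j}$, the permutation is $\pi_2\circ\pi_1$. The same conclusion follows from $\pi(v)=vP$: we get $v P_1P_2=\pi_2(\pi_1(v))$.

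The only real obstacle is bookkeeping: keeping the index convention ($\pi$ versus $\pi^{-1}$, row versus column action) consistent throughout. No step requires any idea beyond these entry computations.
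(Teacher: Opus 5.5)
Your proposal is correct and follows the paper's proof: entrywise computation for (i) and (ii), and for (iii) the rewriting $P_1D_1P_2D_2=P_1P_2D_1^{\pi_2}D_2$ via (ii). Your explicit delta computation identifying $P_1P_2$ with $\pi_2\circ\pi_1$ simply spells out a step the paper only asserts.
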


\begin{proof}
(i) We have $(AP)_{a,j} = \sum_i A_{a,i}P_{i,j}$ and $P_{i,j}\neq 0$ exactly when $i=\pi^{-1}(j)$, so $(AP)_{j} = a_{\pi^{-1}(j)}$; as right multiplication by $D$ scales the $j$-th column by $d_j$.

(ii) $(D P)_{a,j} = D_a P_{a,j}$, which is nonzero if and only if $\pi(a)=j$, with value $d_a$. Also $(PD^\pi)_{a,j} = P_{a,j}d_{\pi^{-1}(j)}$, nonzero if and only if $\pi(a)=j$, in which case $d_{\pi^{-1}(j)}=d_a$.

(iii) Writing $M_i = P_iD_i$ and using (ii), $M_1M_2 = P_1D_1P_2D_2 = P_1P_2D_1^{\pi_2}D_2$, whose permutation part is $P_1P_2$, which corresponds to the permutation $\pi_2\circ \pi_1$.
\end{proof}

By Lemma \ref{lem:matrix}(i), writing $M=PD$, $\pi=\pi_M$, we have
\begin{equation}\label{eq:star}
SGM = F \quad\Longleftrightarrow\quad d_j S\,g_{\pi^{-1}(j)} \;=\; f_j \quad (1\le j\le n),
\end{equation}
equivalently $d_{\pi(i)}Sg_i = d_{\pi(i)}f_{\pi(i)}$ for all $i$. We refer to \eqref{eq:star} throughout. Also, part (ii) of Lemma \ref{lem:matrix} tells us that whether the LCE is posed as $SGPD=F$ or $SGDP=F$ is immaterial. Additionally, we keep in mind that the structure of a monomial matrix, i.e where its non-zero entries lie, is determined entirely by the permutation, and the diagonal matrix simply scales the entries.

\subsection{Projective classes}

The following partition plays, in the linear case, the role played in \cite{BM23} by the partition of the coordinates according to equality of columns.

\begin{definition}\label{def:classes}
For nonzero $u,v\in\F_q^k$ write $u\sim v$ if $v\in\F_q^* u$; this is an equivalence relation. Let $G\in\F_q^{k\times n}$ have no zero column. We denote by $\K(G)$ the partition of $\{1,\dots,n\}$ in which $i$ and $i'$ lie in the same part, which we call a {class}, if and only if $g_i\sim g_{i'}$. We set
\[
m(G) \;:=\; \max_{K\in\K(G)} |K| .
\]
\end{definition}

\begin{lemma}\label{lem:invariance}
Let $G,F\in\F_q^{k\times n}$ have no zero column and suppose $SGM=F$ with $S\in\GL_k(\F_q)$, $M\in\Mon_N$ and $\pi := \pi_M$. Then for all $i,i'$,
\[
g_i \sim g_{i'} \quad\Longleftrightarrow\quad f_{\pi(i)}\sim f_{\pi(i')} .
\]
Consequently $\pi$ maps each class of $G$ onto a class of $F$, inducing a bijection $\K(G)\to\K(F)$ which preserves cardinalities. In particular the multisets of class sizes of $G$ and $F$ coincide, and $m(G)=m(F)$.
\end{lemma}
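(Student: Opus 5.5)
The plan is to use the column relation \eqref{eq:star} directly and to exploit that $S$ is invertible. Write $M=PD$ with $\pi=\pi_M$. By \eqref{eq:star}, for every $j$ we have $f_j=d_j\,S\,g_{\pi^{-1}(j)}$. Putting $j=\pi(i)$ gives $f_{\pi(i)}=d_{\pi(i)}\,S g_i$ for every $i$, with $d_{\pi(i)}\in\F_q^*$. So each column of $F$ indexed by $\pi(i)$ is a nonzero scalar multiple of $S g_i$.

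For the forward direction, suppose $g_{i'}=\lambda g_i$ with $\lambda\in\F_q^*$. Then
\[
f_{\pi(i')}=d_{\pi(i')}\,S g_{i'}=\lambda\, d_{\pi(i')}\, S g_i=\lambda\, d_{\pi(i')}\,d_{\pi(i)}^{-1} f_{\pi(i)} .
\]
The scalar $\lambda d_{\pi(i')}d_{\pi(i)}^{-1}$ lies in $\F_q^*$, so $f_{\pi(i)}\sim f_{\pi(i')}$.

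For the converse, suppose $f_{\pi(i')}=\mu f_{\pi(i)}$ with $\mu\in\F_q^*$. Substituting the expressions above gives $d_{\pi(i')}Sg_{i'}=\mu\, d_{\pi(i)} S g_i$. Since $S\in\GL_k(\F_q)$, we may apply $S^{-1}$ to both sides, and this yields $g_{i'}=\mu\, d_{\pi(i)}d_{\pi(i')}^{-1} g_i$, that is, $g_i\sim g_{i'}$. This inversion is the only point where invertibility of $S$ is needed, and it is the step I expect to carry the argument.

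For the consequences, the equivalence shows that $\pi$ maps each class $K\in\K(G)$ into a single class of $F$. Because the equivalence holds in both directions, distinct classes of $G$ go to distinct classes of $F$. Since $\pi$ is a bijection on $\{1,\dots,n\}$ and the classes of $F$ partition $\{1,\dots,n\}$, every class of $F$ is hit. Hence $K\mapsto\pi(K)$ is a bijection $\K(G)\to\K(F)$ with $|\pi(K)|=|K|$. The multisets of class sizes therefore coincide, and taking maxima gives $m(G)=m(F)$.

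I do not expect any step to be a real obstacle. The only care needed is bookkeeping of indices: keep track of $\pi$ versus $\pi^{-1}$ from Lemma \ref{lem:matrix}(i), and check that all the scalars produced are nonzero, which holds because the $d_j$ lie in $\F_q^*$.
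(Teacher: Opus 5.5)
Your proof is correct and follows the paper's argument: the forward direction is the same scalar computation from \eqref{eq:star}, and for the converse you cancel $S$ directly, where the paper instead reruns the forward computation on the inverse equivalence $S^{-1}FM^{-1}=G$, which amounts to the same thing. Your derivation of the cardinality-preserving bijection $\K(G)\to\K(F)$ from the two-sided equivalence and the bijectivity of $\pi$ is also sound, and slightly more explicit than the paper's.
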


\begin{proof}
By \eqref{eq:star}, $f_{\pi(i)} = d_{\pi(i)} S g_i$ with $d_{\pi(i)}\neq 0$. If $g_{i'}=\lambda g_i$ with $\lambda\neq 0$, then
\[
f_{\pi(i')} = d_{\pi(i')}Sg_{i'} = \lambda d_{\pi(i')}Sg_i = \lambda d_{\pi(i')}d^{-1}_{\pi(i)}\, f_{\pi(i)} ,
\]
a nonzero multiple of $f_{\pi(i)}$. By applying the same computation to $S^{-1}FM^{-1}=G$, we see that $\pi$ is a bijection preserving $\sim$ in both directions, and thus it maps classes onto classes.
\end{proof}

\begin{gce}
In the next lemma, we note that a field automorphism respects the classes of Definition \ref{def:classes}. This will allow us to seamlessly extend our reduction to GCE.

\begin{lemma}\label{lem:frobclass}
Let $G\in\F_q^{k\times n}$ have no zero column and let $\alpha\in\Aut(\F_q)$. Then $\alpha(G)$ has no zero column, $\rank\alpha(G)=\rank G$, and
\[
\K\big(\alpha(G)\big) \;=\; \K(G) \qquad\text{as partitions of } \{1,\dots,n\}, \qquad\text{so } m\big(\alpha(G)\big)=m(G).
\]
Moreover $\alpha\big(G\,\|\,u^{(s)}\big)=\alpha(G)\,\|\,\alpha(u)^{(s)}$ for every column $u$ and every $s\ge0$.
\end{lemma}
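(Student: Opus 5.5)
The whole lemma rests on one fact: $\alpha$ is a field automorphism of $\F_q$ acting entrywise. We then check the four assertions in turn, using only that $\alpha$ is additive, multiplicative, bijective, fixes $0$, and maps $\F_q^*$ onto $\F_q^*$. Throughout we write $\alpha(v)$ for a column $v\in\F_q^k$ with $\alpha$ applied coordinatewise, so that $\alpha(G)$ has columns $\alpha(g_1),\dots,\alpha(g_n)$.

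First, \emph{no zero column}. Since $\alpha$ is injective with $\alpha(0)=0$, we have $\alpha(v)=0$ if and only if $v=0$. Hence each $\alpha(g_i)$ is nonzero.

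Second, \emph{rank}. The $r\times r$ minors of $\alpha(G)$ are $\alpha$ applied to the corresponding minors of $G$. This holds because the determinant is a polynomial with integer coefficients in the entries, and $\alpha$ is a ring homomorphism. Since $\alpha(x)=0$ if and only if $x=0$, a minor of $\alpha(G)$ vanishes exactly when the corresponding minor of $G$ does. Rank is the largest size of a nonvanishing minor, so $\rank\alpha(G)=\rank G$. Alternatively, one can note that $\alpha$ commutes with the row operations of Gaussian elimination, so $\alpha(\rref(G))=\rref(\alpha(G))$ with the same pivots.

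Third, \emph{classes}. We show that $g_i\sim g_{i'}$ if and only if $\alpha(g_i)\sim\alpha(g_{i'})$. Suppose $g_{i'}=\lambda g_i$ with $\lambda\in\F_q^*$. Then, coordinatewise,
\[
\alpha(g_{i'})=\alpha(\lambda)\,\alpha(g_i), \qquad \alpha(\lambda)\in\F_q^*,
\]
which gives the forward direction. For the converse, apply the same argument to $\alpha^{-1}\in\Aut(\F_q)$. Since $\K(\cdot)$ is defined purely through the relation $\sim$ on column indices, $\K(\alpha(G))=\K(G)$ as partitions of $\{1,\dots,n\}$. In particular $m(\alpha(G))=m(G)$.

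Fourth, \emph{appending columns}. The matrix $G\,\|\,u^{(s)}$ has columns $g_1,\dots,g_n,u,\dots,u$. Applying $\alpha$ entrywise acts column by column, so the result has columns $\alpha(g_1),\dots,\alpha(g_n),\alpha(u),\dots,\alpha(u)$, which is exactly $\alpha(G)\,\|\,\alpha(u)^{(s)}$.

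There is no real obstacle here. The only point requiring a moment's care is the rank statement, because $\alpha$ is not $\F_q$-linear. This is why we argue via minors or row reduction rather than via linear maps.
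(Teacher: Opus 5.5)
Your proof is correct and follows essentially the same route as the paper: injectivity of $\alpha$ with $\alpha(0)=0$ for the zero columns, the identity $\det\alpha(N)=\alpha(\det N)$ on minors for the rank, $\alpha(\lambda)\in\F_q^*$ together with $\alpha^{-1}$ for the classes, and the entrywise action for appended columns. The only small difference is in the rank step: you get equality at once because a minor of $\alpha(G)$ vanishes exactly when the corresponding minor of $G$ does, whereas the paper proves $\rank\alpha(G)\ge\rank G$ and then applies the same argument to $\alpha^{-1}$.
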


\begin{proof}
For the first claim, we note that $\alpha$ is a bijection of $\F_q$ with $\alpha(0)=0$, and thus we have $\alpha(g_i)=0$ if and only if $g_i=0$. Now, let $r=\rank G$ and pick an $r\times r$ submatrix $N$ of $G$ with $\det N\neq 0$. The corresponding submatrix of $\alpha(G)$ is $\alpha(N)$ and, because $\det$ is a polynomial with coefficients in $\F_p$ and $\alpha$ a ring homomorphism fixing $\F_p$, we get $\det\alpha(N)=\alpha(\det N)\neq 0$. Hence $\rank\alpha(G)\ge r$. Applying the same argument to $\alpha^{-1}$ gives us the equality of ranks.

Further, suppose $g_{i'}=\lambda g_i$ for some $\lambda\in\F_q^*$. Then $\alpha(g_{i'})=\alpha(\lambda)\,\alpha(g_i)$ with $\alpha(\lambda)\in\F_q^*$, so $\alpha(g_i)\sim\alpha(g_{i'})$. Similarly as above, the other direction follows by looking at $\alpha^{-1}$. Thus $g_i\sim g_{i'}$ if and only if $\alpha(g_i)\sim\alpha(g_{i'})$, and the claims about the partitions and partition sizes follow. The last claim is immediate, with $\alpha$ being applied entrywise.
\end{proof}
\end{gce}

\subsection{Preprocessing}\label{sec:prep}

We may assume the following about the input, at the cost of $O(nk^2)$ field operations and no oracle calls.

\begin{lemma}\label{lem:prep}
Let $G,F\in\F_q^{k\times n}$.
\begin{enumerate}
\item[(i)] If $\rank(G)\neq\rank(F)$ then $(G,F)$ is a NO instance of LCE. Otherwise both may be replaced by full row rank generator matrices of the same codes, at the cost of a known change of basis.
\item[(ii)] Let $Z_G := \{i : g_i=0\}$ and $Z_F := \{j : f_j = 0\}$, and let $\bar G,\bar F$ be obtained by deleting the corresponding columns. Then $(G,F)$ is a YES instance if and only if $|Z_G|=|Z_F|$ and $(\bar G,\bar F)$ is a YES instance; moreover a solution for $(\bar G,\bar F)$ extends to one for $(G,F)$ by any bijection $Z_G\to Z_F$ with all scalars equal to $1$.
\end{enumerate}
\end{lemma}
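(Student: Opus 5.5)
For (i), I will use that rank is invariant under the allowed transformations. If $SGM=F$ with $S\in\GL_k(\F_q)$ and $M\in\Mon_n$, then $S$ and $M$ are invertible, so $\rank F=\rank(SGM)=\rank G$. Taking the contrapositive, $\rank G\neq\rank F$ certifies a NO instance.

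When the ranks agree, say both equal $r$, I will run Gaussian elimination on $G$, which costs $O(nk^2)$ field operations. This produces $T_G\in\GL_k(\F_q)$ with $T_GG=\begin{pmatrix}G'\\0\end{pmatrix}$, where $G'\in\F_q^{r\times n}$ has full row rank and generates the same code. I will do the same for $F$ to obtain $T_F$ and $F'$.

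It remains to check that the instances $(G,F)$ and $(G',F')$ are equivalent, with explicit translation of solutions. Suppose $S'G'M=F'$ with $S'\in\GL_r$. Then $S:=T_F^{-1}\,\diag(S',I_{k-r})\,T_G$ satisfies $SGM=F$. Conversely, suppose $SGM=F$. Then $G'M$ and $F'$ both generate the code $\C_2$ and have full row rank $r$. Hence there is a unique $S'\in\GL_r$ with $S'G'M=F'$, so it suffices to produce any such $S'$. I will obtain it by expressing the rows of $F'$ in terms of those of $G'M$, which again costs linear algebra. Once (i) is settled, all further work takes place on full row rank matrices.

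For (ii), the key observation comes from \eqref{eq:star}: $f_j=d_jSg_{\pi^{-1}(j)}$, where $S$ is invertible and $d_j\neq0$. Therefore $f_j=0$ if and only if $g_{\pi^{-1}(j)}=0$, which means $\pi(Z_G)=Z_F$. This immediately gives $|Z_G|=|Z_F|$.

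To handle the reduced instance, I will restrict $\pi$ to the complement of $Z_G$ and relabel indices order-preservingly. The restriction, together with the corresponding scalars, gives a monomial $\bar M$ with $S\bar G\bar M=\bar F$, so $(\bar G,\bar F)$ is a YES instance.

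For the converse, start from a solution $S\bar G\bar M=\bar F$ and a chosen bijection $\beta:Z_G\to Z_F$. I will define $\pi$ by combining the permutation of $\bar M$ (lifted through the order-preserving relabelings) with $\beta$, and set $d_j=1$ for $j\in Z_F$. Now check \eqref{eq:star} column by column. For $j\in Z_F$ both sides are $0$. For the other $j$ the identity is inherited from the reduced solution.

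The main subtlety is bookkeeping. The relabeling of indices between $\{1,\dots,n\}$ and the reduced index sets must be consistent with Lemma \ref{lem:matrix}(i). Deleting the zero columns does not change the rank, so the full row rank assumption from (i) survives. The complexity claim follows because only one elimination per matrix and a linear scan for zero columns are needed.
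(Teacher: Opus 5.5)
Your proposal is correct and matches the paper's proof. For (ii), you use \eqref{eq:star} to get $\pi(Z_G)=Z_F$, restrict $M$ to the nonzero coordinates, and extend a reduced solution by an arbitrary bijection $Z_G\to Z_F$ with unit scalars, exactly as the paper does; for (i), which the paper only calls immediate, your translation of solutions via $T_G$, $T_F$ and $\diag(S',I_{k-r})$, together with the unique $S'$ relating the full-row-rank generator matrices $G'M$ and $F'$ of $\C_2$, correctly supplies the omitted details.
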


\begin{proof}
(i) is immediate.

(ii) Suppose $SGM=F$ with $\pi=\pi_M$. By \eqref{eq:star}, $Sg_{\pi^{-1}(j)} = d^{-1}_jf_j$ with $S$ invertible and $d^{-1}_j\neq 0$, so $g_{\pi^{-1}(j)}=0$ if and only if $f_j=0$. Hence $\pi$ maps $Z_G$ bijectively onto $Z_F$ and its complement onto the complement of $Z_F$; restricting $M$ to the latter block gives an equivalence of $\bar G,\bar F$. Conversely, given $\bar S\bar G\bar M = \bar F$, extend $\bar M$ as described; relation \eqref{eq:star} holds trivially on the deleted coordinates, both sides being zero. Finally $\rank\bar G = \rank G$, since deleting zero columns does not change the column rank.
\end{proof}

\new{As we saw earlier, ranks and zero columns are unaffected by $\alpha$. Thus, both parts of Lemma \ref{lem:prep} hold verbatim for GCE}

{From now on $G$ and $F$ are assumed to have full row rank $k$ and no zero column}, and $n$ denotes their common length. We also assume, having spent one oracle call, that $(G,F)$ is a YES instance of LCE. By Lemma \ref{lem:invariance} we may then set
\[
m := m(G) = m(F).
\]

\section{Recovering the permutation}\label{sec:main}

This is the first step of our reduction. This section is also common to both LCE and GCE reductions, as we justify later, but for now, we stick to the LCE formulation. Here, given access to an oracle, we find index-by-index the permutation associated with the monomial. We search iteratively for the image $\pi(t)$ for $t=1,2,\dots,n$, where $\pi = \pi_M$ for a monomial matrix $M$ such that $SGM=F$ for the matrices $G,F$ generating the two input codes. We ask if $\pi(t)=i_t$ by extending the generator matrix $G$ with repetitions of its $t$-th column, and by extending the generator matrix $F$ with repetitions of its $i_t$-th column. As we will show, the LCE oracle only returns a positive answer if $\pi(t)=i_t$, and we can set $i_t$ as the image of $\pi$ at $t$. Further extending $G, F$ with calls to the LCE oracle leads us to the permutation. As in \cite{BM23}, we cannot simply compute the image of $\pi$ at all $t$ with independent calls to the oracle, because there might exist multiple solutions to LCE that do not agree with each other at all indices. 

The difference with \cite{BM23} is that the appended columns are now "pinned" only up to a scalar, because rescaling an appended coordinate produces a linearly equivalent instance and therefore cannot change the answer of the oracle. Hence, we work with the classes of Definition \ref{def:classes} throughout. The oracle can be made to certify a matching of projective classes, and Lemma \ref{lem:normalise} converts such a matching into a statement about indices.

We start with a lemma that allows us to assume a certain choice for the images of a solution $M$ at indices lying in a common projective class. We show that if the columns $g_t,g_{t_1},\dots,g_{t_k}$ are all proportional, so are the columns $f_{\pi(t)},f_{\pi(t_1)},\dots,f_{\pi(t_k)}$. Hence a monomial matrix $M'$ whose permutation agrees with $\pi_M$ on inputs other than $t,t_1,\dots,t_k$ will also satisfy $SAM'=B$, for a suitable choice of the diagonal part.

\begin{lemma}\label{lem:normalise}
Let $G,F\in\F_q^{k\times n}$ have no zero columns and suppose $SGM=F$ for some $S\in\GL_k(\F_q)$ and $M\in\Mon_n$; write $\pi:=\pi_M$. Let $\mathcal{T} \subseteq \K(G)$ be any collection of classes and, for each $K\in\mathcal{T}$, let $\sigma_K : K \to \pi(K)$ be an arbitrary bijection. Then there exists $M'\in\Mon_n$ such that
\[
SGM'=F, \qquad \pi_{M'}\big|_{K}=\sigma_K \ \ (K\in\mathcal{T}), \qquad \pi_{M'}\big|_{\,\{1,\dots,n\}\setminus\bigcup\mathcal{T}} = \pi\big|_{\,\{1,\dots,n\}\setminus\bigcup\mathcal{T}} .
\]
\end{lemma}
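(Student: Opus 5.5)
We will construct $M'$ explicitly by keeping $S$ fixed, defining a new permutation $\pi'$, and then choosing the diagonal entries so that the column relation \eqref{eq:star} holds coordinatewise. Concretely, define $\pi' : \{1,\dots,n\}\to\{1,\dots,n\}$ by $\pi'|_K := \sigma_K$ for $K\in\mathcal{T}$ and $\pi' := \pi$ on the complement of $\bigcup\mathcal{T}$. Since the classes in $\mathcal{T}$ are disjoint, and each $\sigma_K$ is a bijection $K\to\pi(K)$ onto exactly the image of $\pi$ on $K$, the map $\pi'$ is a bijection. Indeed, it permutes the values of $\pi$ within each block $\pi(K)$ and agrees with $\pi$ elsewhere, so its image is $\pi(\{1,\dots,n\})=\{1,\dots,n\}$. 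Let $P'$ be the permutation matrix of $\pi'$.

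Next we find scalars $d'_j\in\F_q^*$ with $d'_j\,S g_{\pi'^{-1}(j)} = f_j$ for every $j$. Then, by Lemma \ref{lem:matrix}(i), $M' := P'\diag(d'_1,\dots,d'_n)$ satisfies $SGM'=F$, and $\pi_{M'}=\pi'$ has the required restrictions. For $j$ outside $\pi(\bigcup\mathcal{T})$ we have $\pi'^{-1}(j)=\pi^{-1}(j)$, so we simply take $d'_j:=d_j$, where $M=PD$.

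The one real step is a coordinate $j=\sigma_K(i)$ with $K\in\mathcal{T}$ and $i\in K$. Put $i_0:=\pi^{-1}(j)$, which lies in $K$ because $j\in\pi(K)$. Since $i$ and $i_0$ are in the same class, $g_{i_0}=\lambda g_i$ for some $\lambda\in\F_q^*$ (Definition \ref{def:classes}; all columns are nonzero). By \eqref{eq:star},
\[
f_j = d_j S g_{i_0} = d_j\lambda\, S g_i ,
\]
so we set $d'_j := d_j\lambda \neq 0$. This is essentially the computation in the proof of Lemma \ref{lem:invariance}. We do not expect any serious obstacle here. The only points needing care are checking that $\pi'$ is a bijection and that $i_0$ really lies in $K$; both follow from $\sigma_K$ mapping onto $\pi(K)$ and from $\pi$ being a bijection. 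With $\pi'$ and the $d'_j$ defined, the conclusion follows directly from Lemma \ref{lem:matrix}(i).
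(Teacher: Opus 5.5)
Your proof is correct, and it produces the same $M'$ as the paper, but by a more direct route. The paper does not define $\pi'$ and its scalars coordinate by coordinate. Instead it sets $\rho|_K:=\pi^{-1}\circ\sigma_K$ (with $\rho=\mathrm{id}$ off $\bigcup\mathcal{T}$) and observes that permuting the columns of $G$ within projective classes is undone by a diagonal rescaling, $GQ=G\Delta$. It then uses $\Delta P=P\Delta^{\pi}$ from Lemma~\ref{lem:matrix}(ii) to obtain $M':=QM(\Delta^{\pi})^{-1}$, and reads off its permutation $\pi\circ\rho$ with Lemma~\ref{lem:matrix}(iii).

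Unwinding the definitions, the paper's scalar at $j=\sigma_K(i)$ is $d_j\,\delta_{\pi^{-1}(j)}^{-1}=d_j\lambda$, which is exactly your $d'_j$. This must be so, since once $S$ and $\pi'$ are fixed the scalars are forced by $Sg_i\neq 0$.

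The paper's factorisation gives a structural reading: $M'=NM$ with $N:=Q\Delta^{-1}$ satisfying $GN=G$, so the new solution comes from composing with a monomial automorphism of $G$. Your version is shorter and avoids the composition-order conventions of Lemma~\ref{lem:matrix}(ii)--(iii) entirely. The only bookkeeping left is the bijectivity of $\pi'$ and the identity $\pi'^{-1}(\sigma_K(i))=i$, and you handle both correctly.
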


\begin{proof}
Define $\rho:\{1,\dots,n\}\to\{1,\dots,n\}$ by $\rho|_K := \pi^{-1}\circ \sigma_K$ for $K\in\mathcal{T}$ and $\rho(i)=i$ for all $i \notin \bigcup \mathcal{T}$. Each $\rho|_K$ is a bijection from $K$ onto $K$, since $\sigma_K(K)=\pi(K)$ and $\pi^{-1}(\pi(K))=K$. By Lemma \ref{lem:invariance}, the classes in $\mathcal{T}$ are pairwise disjoint, so $\rho\in\mathcal{S}_n$. Let $Q$ be its permutation matrix.

We first observe that any permutation of the columns of $G$ within projective classes leads to a {monomial} automorphism of $G$. Indeed, by Lemma \ref{lem:matrix}(i), $(GQ)_{j}=g_{\rho^{-1}(j)}$. If $j\notin\bigcup \mathcal{T}$ then $\rho^{-1}(j)=j$ and we set $\delta_j:=1$; if $j\in K\in \mathcal{T}$ then $\rho^{-1}(j)\in K$, so $g_{\rho^{-1}(j)}\sim g_j$, i.e.\ $g_{\rho^{-1}(j)} = \delta_j g_j$ for a unique nonzero $\delta_j\in\F_q^*$, because $g_j\neq 0$. Hence
\begin{equation}\label{eq:auto}
GQ = G\Delta, \qquad \Delta := \diag(\delta_1,\dots,\delta_n)\in\GL_n(\F_q).
\end{equation}

Now write $M=PD$. Using \eqref{eq:auto}, then Lemma \ref{lem:matrix}(ii), then the commutativity of diagonal matrices,
\[
S G Q M \;=\; S G \Delta P D \;=\; S G P \Delta^{\pi} D \;=\; S G P D \Delta^{\pi} \;=\; F \Delta^{\pi} .
\]
Set $M' := Q\,M\,(\Delta^{\pi})^{-1}$, a product of monomial matrices and hence monomial. Then $SGM'=F$. The permutation part is the matrix $QP$, which corresponds to the permutation $\pi\circ\rho$. For $j\in K\in \mathcal{T}$ this equals $\pi(\pi^{-1}(\sigma_K(j)))=\sigma_K(j)$, and for $j\notin\bigcup\mathcal{T}$, it equals $\pi(j)$. Note that $S$ remains unchanged.
\end{proof}


\new{We note here that Lemma \ref{lem:normalise} as a statement about linear equivalence can be applied, in Section \ref{sec:general}, to the pair $\big(\alpha(G),F\big)$ for a fixed $\alpha$. We do not need a version of it for semilinear isometries, and indeed none is readily available, since exchanging two columns related by a nontrivial $\alpha$ cannot be achieved by scaling columns.}

\subsection{Base case of the induction}

Here we prove the base case of our induction, i.e, given $G,F$ linearly equivalent, we show how to compute the image of one element of $\{1,\dots,n\}$ via $\pi$ if we have access to an oracle. Without loss of generality, we assume that we need to compute the image of $1$. We run a loop over all possible candidates. This means that, given $i_1\in\{1,\dots,n\}$, we want to decide whether there is a monomial matrix $M$ such that $SGM=F$ for some $S\in\GL_k(\F_q)$ and $\pi_M(1)=i_1$. We stop when we find the first $i_1$ that satisfies this requirement. Recall $m=m(G)=m(F)$. Let us append $m$ copies of $g_1$ to $G$, thus constructing the matrix
\[
G_1 = \big(G \,\big|\, \underbrace{g_1,\dots,g_1}_{m}\big),
\]
and let us append $m$ copies of $f_{i_1}$to $F$, constructing the matrix
\[
F_1 = \big(F \,\big|\, \underbrace{f_{i_1},\dots,f_{i_1}}_{m}\big).
\]
We then ask the oracle if the code generated by $G_1$ is linearly equivalent to the code generated by $F_1$.

\begin{proposition}\label{prop:base}
If the answer is NO, then $\pi_M(1)\neq i_1$ for any monomial matrix $M$ realising the equivalence between $G$ and $F$. If the answer is YES, there are $S\in\GL_k(\F_q)$ and a monomial matrix $M$ with $SGM=F$ and $\pi_M(1)=i_1$.

\end{proposition}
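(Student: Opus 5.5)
We prove the two claims separately. For the NO direction we argue by contrapositive: assume some $M$ realises the equivalence and has $\pi_M(1)=i_1$. We extend $M$ to the $(n+m)\times(n+m)$ block diagonal monomial matrix $M_1=\diag(M,I_m)$, possibly after rescaling the last block. By \eqref{eq:star} we have $Sg_1 = d_{i_1}^{-1} f_{i_1}$. Putting $d_{i_1}$ as the scalar on each of the $m$ appended coordinates, and using the identity permutation there, gives $SG_1M_1=F_1$. So the oracle answers YES. This direction is routine.

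For the YES direction, take $S$ and $M_1\in\Mon_{n+m}$ with $SG_1M_1=F_1$, and write $\pi_1=\pi_{M_1}$. By Lemma \ref{lem:invariance} applied to $(G_1,F_1)$, $\pi_1$ maps the class $K$ of $G_1$ containing index $1$ onto a class of $F_1$ of the same size. The class $K$ consists of the class of $1$ in $\K(G)$ together with the $m$ appended indices, so $|K|\ge m+1$. Every class of $F_1$ is a class of $F$ (of size at most $m$), except the one containing the appended copies of $f_{i_1}$, whose size is $|[i_1]_F|+m$. The size count therefore forces $\pi_1(K)$ to be the class of $F_1$ containing $i_1$ and the appended indices.

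Next we apply Lemma \ref{lem:normalise} to $(G_1,F_1)$ with $\mathcal{T}=\{K\}$. We choose the bijection $\sigma_K\colon K\to\pi_1(K)$ so that it maps $1\mapsto i_1$, maps the appended indices of $G_1$ onto the appended indices of $F_1$ (both sets have size $m$), and maps the rest of $[1]_G$ onto the rest of $[i_1]_F$. These last two sets have equal size because $|K|=|\pi_1(K)|$. This gives $M_1'$ with $SG_1M_1'=F_1$, $\pi_{M_1'}(1)=i_1$, and $\pi_{M_1'}$ preserving $\{1,\dots,n\}$. Since a monomial matrix whose permutation preserves the first $n$ indices is block diagonal, $M_1'=\diag(M,M'')$ with $M\in\Mon_n$. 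Restricting $SG_1M_1'=F_1$ to the first $n$ columns gives $SGM=F$ with $\pi_M(1)=i_1$.

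The main obstacle is the size-counting step. It relies on choosing exactly $m$ copies, so that the enlarged class of $F_1$ is the unique class of size at least $m+1$. The case where $[i_1]_F$ itself has size $m$ must be handled carefully; the strict inequality $m+1>m$ is what makes it work. A second point to check is that $\sigma_K$ can be chosen to respect the split between original and appended indices, which needs $|[1]_G|=|[i_1]_F|$. This equality follows from $|K|=|\pi_1(K)|$, which gives $|[1]_G|+m=|[i_1]_F|+m$.
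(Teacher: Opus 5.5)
Your proof is correct and follows essentially the same route as the paper's: for the NO direction you extend $M$ by the identity permutation with scalar $d_{i_1}$ on the appended block, and for the YES direction you use the unique class of size $>m$ on each side to force $c=c'$, apply Lemma \ref{lem:normalise} to that class, and restrict the resulting block-diagonal monomial matrix to the first $n$ columns. The one step you leave implicit is why $\pi_{M_1'}$ sends indices outside $K$ into $\{1,\dots,n\}$; this follows from bijectivity, because $\pi_1(K)$ already contains every appended position of $F_1$, whereas the paper argues it through class sizes.
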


\begin{proof}
For the first claim, we suppose the contrary, i.e.\ the answer is negative and there exists $M=PD$ realising the equivalence between $G,F$ with $\pi:=\pi_M$ satisfying $\pi(1)=i_1$. Define $\pi'\in\mathcal{S}_{n+m}$ to be equal to $\pi$ on $\{1,\dots,n\}$ and such that $\pi'(n+s)=n+s$ for all $s\in\{1,\dots,m\}$, and let $M'=P'D'$ with $d'_j := d_j$ for $j\le n$ and $d'_{n+s} := d_{i_1}$ for $s\in\{1,\dots,m\}$. For $j\le n$ we have $(\pi')^{-1}(j)=\pi^{-1}(j)\le n$, so the $j$-th column of $SG_1M'$ is $d_j S g_{\pi^{-1}(j)} = f_j$. For $j = n+s$ we get $d'_{n+s}\,Sg_1 = d_{i_1}d^{-1}_{\pi(1)}f_{\pi(1)} = f_{i_1}$. Hence $SG_1M' = F_1$, so there is at least one equivalence, and therefore the answer could not have been negative.

Suppose now that the answer is positive, then we have to prove that $M$ can be chosen such that $\pi_M(1)=i_1$. Let $c$ be the size of the class of $g_1$ in $\K(G)$ and $c'$ the size of the class of $f_{i_1}$ in $\K(F)$. Note that $1\le c,c'\le m$. In $G_1$ the class of $g_1$ has size $c+m$, and every other class has size at most $m$; so $G_1$ has exactly one class of size $>m$. The same holds for $F_1$, with the class of $f_{i_1}$ having size $c'+m$.

Let $M'$ be a monomial matrix realising the equivalence of $G_1$ and $F_1$ and $S$ an invertible matrix such that $SG_1M'=F_1$. By Lemma \ref{lem:invariance}, $\pi_{M'}$ induces a cardinality-preserving bijection $\K(G_1)\to\K(F_1)$. Since each side has exactly one class of size $>m$, these two classes must correspond, forcing $c=c'$ and
\[
\pi_{M'}\big(\{\text{class of } g_1 \text{ in } G_1\}\big) = \{\text{class of } f_{i_1} \text{ in } F_1\} .
\]
Using Lemma \ref{lem:normalise} applied to this class, $M'$ can be chosen such that $\pi_{M'}(1)=i_1$, such that $\pi_{M'}(n+s)=n+s$ for all $s\in\{1,\dots,m\}$, and such that the remaining $c-1$ indices of the class of $g_1$ inside $\{1,\dots,n\}$ are sent onto the remaining $c'-1=c-1$ indices of the class of $f_{i_1}$ inside $\{1,\dots,n\}$. Moreover, every other class of $G_1$ is contained in $\{1,\dots,n\}$ and is sent by $\pi_{M'}$ to a class of $F_1$ of size at most $m$ contained in $\{1,\dots,n\}$. Thus, $\pi_{M'}$ maps $\{1,\dots,n\}$ onto itself and satisfies $\pi_{M'}(1)=i_1$.

It follows that $M'$ can be treated as a block diagonal with respect to the splitting $\{1,\dots,n\}\cup\{n+1,\dots,n+m\}$. Let $M$ be its top left $n\times n$ block, a monomial matrix with $\pi_M = \pi_{M'}|_{\{1,\dots,n\}}$. Comparing the $j$-th columns of $SG_1M'=F_1$ for $j\le n$, and using $(\pi_{M'})^{-1}(j)\le n$, gives $SGM = F$. In particular $M$ realises an equivalence between $G$ and $F$ with $\pi_M(1)=i_1$.
\end{proof}

\subsection{The general step}

Now, we assume that we found a set $A_{T-1}$ of $T-1$ indices whose images by a monomial matrix $M_{T-1}$ that realises the equivalence between $G$ and $F$ are known. That is, we know that there exist $S_{T-1}\in\GL_k(\F_q)$ and a monomial matrix $M_{T-1}$ such that
\[
S_{T-1}GM_{T-1}=F, \qquad \pi_{M_{T-1}}(t)=i_t \ \text{ for all } t\in A_{T-1}, \qquad |A_{T-1}|=T-1 .
\]
As in \cite{BM23}, we should notice that it is important to carry over the information of the previous steps, as we need to be able to extend the ``same'' equivalence step by step. When we move to the next element in the domain, we need to avoid finding an image of another equivalence, for example in the case in which the monomial automorphism group of the code acts transitively, thus allowing any possible image for the index $t$.

Let $t_1,\dots,t_\ell$ be a list of elements of $A_{T-1}$ containing exactly one element from each class of $\K(G)$ meeting $A_{T-1}$. We call them the {representatives}, and we write $K_j\in\K(G)$ for the class of $g_{t_j}$ and $c_j := |K_j|$. Notice that we must have $\ell\le |A_{T-1}|=T-1$. Set $L_j := \pi_{M_{T-1}}(K_j) \in \K(F)$. By Lemma \ref{lem:invariance}, $L_j$ is the class of $f_{\pi_{M_{T-1}}(t_j)}$ in $F$, and $|L_j|=c_j$. Note that $L_j$ does not depend on the choice of $M_{T-1}$. The classes $L_1,\dots,L_\ell$ are pairwise distinct, again by Lemma \ref{lem:invariance}, because the $K_j$s are.

We record the following property, which will let us prescribe the images of {all} committed indices, not merely of the representatives.

\begin{lemma}\label{lem:consistency}
For every $j\le\ell$ we have $\pi_{M_{T-1}}(A_{T-1})\cap L_j = \pi_{M_{T-1}}(A_{T-1}\cap K_j)$.
\end{lemma}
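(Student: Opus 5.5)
This is a two-inclusion argument that uses only the fact that $\pi:=\pi_{M_{T-1}}$ is a bijection of $\{1,\dots,n\}$ together with $L_j=\pi(K_j)$. No oracle information and no property of the specific $M_{T-1}$ beyond Lemma \ref{lem:invariance} is needed.

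For the inclusion $\supseteq$, take $x\in A_{T-1}\cap K_j$. Then $\pi(x)\in\pi(A_{T-1})$ trivially, and $\pi(x)\in\pi(K_j)=L_j$ by the definition of $L_j$. Hence $\pi(x)$ lies in the intersection.

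For the inclusion $\subseteq$, take $y\in\pi(A_{T-1})\cap L_j$ and write $y=\pi(x)$ with $x\in A_{T-1}$. Since $y\in L_j=\pi(K_j)$, there is also some $x'\in K_j$ with $\pi(x')=y$. Injectivity of $\pi$ then gives $x=x'$, so $x\in A_{T-1}\cap K_j$ and $y\in\pi(A_{T-1}\cap K_j)$.

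Both directions are just the identity $\pi(X)\cap\pi(Y)=\pi(X\cap Y)$ for a bijection, so there is no real obstacle. The only point worth making explicit is that $L_j$ is exactly the image $\pi(K_j)$, a class of $F$ of size $c_j$ by Lemma \ref{lem:invariance}, and not merely some class containing $\pi(t_j)$. This is what allows the second inclusion to pull $y$ back into $K_j$.
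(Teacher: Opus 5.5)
Your proof is correct, but it takes a slightly different route from the paper's. You take the definition $L_j=\pi_{M_{T-1}}(K_j)$ literally and note that the lemma is then just the identity $\pi(X)\cap\pi(Y)=\pi(X\cap Y)$ for an injective map. The reverse inclusion thus comes from injectivity alone.

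The paper works instead with the intrinsic description of $L_j$ as the class of $f_{\pi_{M_{T-1}}(t_j)}$ in $\K(F)$. It proves both inclusions through the chain $a\in K_j \iff g_a\sim g_{t_j} \iff f_{\pi_{M_{T-1}}(a)}\sim f_{\pi_{M_{T-1}}(t_j)} \iff \pi_{M_{T-1}}(a)\in L_j$. The middle equivalence uses both directions of Lemma \ref{lem:invariance}.

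Your version is more elementary. It shows the statement has no content beyond bijectivity once $L_j$ is defined as an image. The paper's version still works if $L_j$ is defined as the class of $f_{i_{t_j}}$. That is how Algorithm \ref{alg:perm} actually computes it, independently of $M_{T-1}$. With that definition, the identification $L_j=\pi_{M_{T-1}}(K_j)$ is itself an application of Lemma \ref{lem:invariance}.

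This also clarifies your closing remark. There is exactly one class of $F$ containing $\pi(t_j)$, and by Lemma \ref{lem:invariance} it equals $\pi(K_j)$. So the two descriptions you contrast name the same set. The real point is that identifying them is precisely the step the paper spends its effort on, and your argument avoids it by using the definition.
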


\begin{proof}
If $a\in A_{T-1}\cap K_j$ then $g_a\sim g_{t_j}$, so by Lemma \ref{lem:invariance} $f_{\pi_{M_{T-1}}(a)}\sim f_{\pi_{M_{T-1}}(t_j)}$, i.e.\ $\pi_{M_{T-1}}(a)\in L_j$. Conversely if $a\in A_{T-1}$ is such that $\pi_{M_{T-1}}(a)\in L_j$ then we have $g_a\sim g_{t_j}$, i.e.\ $a\in K_j$.
\end{proof}

Next, we build a query to submit to the oracle. We construct the matrix $G'_T$ by appending to $G$ $m$ columns equal to $g_{t_1}$, $2m$ columns equal to $g_{t_2}$, \dots, $\ell m$ columns equal to $g_{t_\ell}$. Similarly, we construct $F'_T$ by appending to $F$ $m$ columns equal to $f_{\pi_{M_{T-1}}(t_1)}$, $2m$ columns equal to $f_{\pi_{M_{T-1}}(t_2)}$, \dots, $\ell m$ columns equal to $f_{\pi_{M_{T-1}}(t_\ell)}$. We write $B_j$ for the block of $jm$ appended positions carrying $g_{t_j}$; the corresponding appended positions of $F'_T$ occupy the same indices, and we denote them $B'_j$.

\begin{proposition}\label{fact:equiv}
The matrices $G'_T$ and $F'_T$ span linearly equivalent codes.
\end{proposition}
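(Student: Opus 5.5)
We exhibit an explicit equivalence, extending $(S_{T-1},M_{T-1})$ in the same way as in the proof of Proposition \ref{prop:base}. Write $\pi:=\pi_{M_{T-1}}$ and $M_{T-1}=PD$ with $D=\diag(d_1,\dots,d_n)$. Let $N$ be the total number of appended positions, so $N=m+2m+\dots+\ell m$. Define $\pi'\in\mathcal{S}_{n+N}$ by $\pi'|_{\{1,\dots,n\}}=\pi$ and $\pi'(p)=p$ for every appended position $p$. Define the scalars by $d'_j:=d_j$ for $j\le n$, and $d'_p:=d_{\pi(t_j)}$ for each $p\in B_j$. Let $M'=P'D'$ be the resulting monomial matrix of size $n+N$, and keep $S:=S_{T-1}$.

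We then verify $SG'_TM'=F'_T$ column by column using Lemma \ref{lem:matrix}(i).

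For $j\le n$, we have $(\pi')^{-1}(j)=\pi^{-1}(j)\le n$. So the $j$-th column is $d_jSg_{\pi^{-1}(j)}=f_j$ by \eqref{eq:star}, since $S_{T-1}GM_{T-1}=F$.

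For $p\in B_j$, we have $(\pi')^{-1}(p)=p$, and the $p$-th column of $G'_T$ is $g_{t_j}$. So the $p$-th column of the product is $d_{\pi(t_j)}Sg_{t_j}$. By \eqref{eq:star} applied at index $\pi(t_j)$, this equals $f_{\pi(t_j)}$, which is exactly the $p$-th column of $F'_T$, because $B'_j$ occupies the same indices as $B_j$.

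Hence $SG'_TM'=F'_T$ with $S$ invertible and $M'$ monomial, so the two codes are linearly equivalent.

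There is no real obstacle. The only points needing care are the bookkeeping that appended positions are fixed by $\pi'$, so that the blocks $B_j$ and $B'_j$ align, and that the scalar on the block $B_j$ is chosen as $d_{\pi(t_j)}$ rather than anything depending on $j$ alone. Note also that if $G'_T$ is required to have full row rank this is automatic, since appending columns preserves rank $k$.
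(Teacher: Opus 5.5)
Your proposal is correct and matches the paper's proof: extend $\pi_{M_{T-1}}$ by the identity on the appended positions, put the scalar $d_{\pi_{M_{T-1}}(t_j)}$ on every position of $B_j$, keep $S_{T-1}$, and verify $S_{T-1}G'_TM'_T=F'_T$ column by column via Lemma \ref{lem:matrix}(i) and \eqref{eq:star}. Your version spells out the column check on the appended blocks that the paper leaves to the reader.
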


\begin{proof}
Since $\exists S_{T-1}$ with $S_{T-1}GM_{T-1}=F$, write $M_{T-1}=P_{T-1} D_{T-1}$ and let $d^{-1}_j$ be as in \eqref{eq:star}, so that $S_{T-1}g_{t_j} = d^{-1}_{\pi_{M_{T-1}}(t_j)}f_{\pi_{M_{T-1}}(t_j)}$. Extend $\pi_{M_{T-1}}$ to $\pi'_T$ by $\pi'_T(i)=\pi_{M_{T-1}}(i)$ for $i\le n$ and $\pi'_T(i)=i$ for $i>n$, and extend the diagonal part by setting the entry at each position of $B_j$ equal to $d_{\pi_{M_{T-1}}(t_j)}$. Calling $M'_T$ the resulting monomial matrix and using Lemma \ref{lem:matrix}(i), one checks column by column that $S_{T-1}G'_TM'_T=F'_T$.
\end{proof}

\begin{proposition}\label{fact:agree}
There is a monomial matrix $M'_T$ and an invertible $k\times k$ matrix $S'_T$ such that $S'_TG'_TM'_T=F'_T$ and $\pi_{M_T}\big|_{A_{T-1}}=\pi_{M'_{T-1}}\big|_{A_{T-1}}$.
\end{proposition}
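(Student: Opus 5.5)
The statement has a notational slip. I read it as follows: there is a monomial $M'_T$ with $S'_TG'_TM'_T=F'_T$ whose restriction $\pi_{M'_T}|_{A_{T-1}}$ agrees with $\pi_{M_{T-1}}|_{A_{T-1}}$, i.e.\ sends $t\mapsto i_t$ for all $t\in A_{T-1}$. It seems cleanest to make this explicit, since the extended $M'_T$ is what later gets restricted to an $n\times n$ block.

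The plan is to take the monomial matrix $M'_T$ already built in the proof of Proposition \ref{fact:equiv}, with $S'_T:=S_{T-1}$, and check that it has the required property. By construction, $\pi'_T$ coincides with $\pi_{M_{T-1}}$ on $\{1,\dots,n\}\supseteq A_{T-1}$ and fixes every appended position. So $\pi'_T(t)=\pi_{M_{T-1}}(t)=i_t$ for every $t\in A_{T-1}$. The only real content is verifying $S_{T-1}G'_TM'_T=F'_T$, and I would do this column by column via Lemma \ref{lem:matrix}(i), as in the base case Proposition \ref{prop:base}.

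For $j\le n$, we have $(\pi'_T)^{-1}(j)=\pi_{M_{T-1}}^{-1}(j)\le n$. The $j$-th column is therefore $d_jS_{T-1}g_{\pi_{M_{T-1}}^{-1}(j)}=f_j$ by \eqref{eq:star}.

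For an appended position $p\in B_j$, the column of $G'_T$ at $p$ is $g_{t_j}$, and the scalar there is $d_{\pi_{M_{T-1}}(t_j)}$. Using $S_{T-1}g_{t_j}=d^{-1}_{\pi_{M_{T-1}}(t_j)}f_{\pi_{M_{T-1}}(t_j)}$, the column at $p$ becomes $f_{\pi_{M_{T-1}}(t_j)}$. This is exactly the column of $F'_T$ at $p\in B'_j$, since $B'_j$ and $B_j$ occupy the same indices.

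There is no genuine obstacle; the only care needed is bookkeeping. The diagonal entry on $B_j$ must be indexed by the image $\pi_{M_{T-1}}(t_j)$, not by $t_j$, so that it cancels the factor in \eqref{eq:star}. I would also note in passing that the stronger forward-looking fact is not needed here. That fact says that \emph{every} equivalence of $G'_T,F'_T$ can be normalised (by Lemma \ref{lem:normalise} together with Lemma \ref{lem:consistency}, using the distinct block sizes $jm+c_j$ to pin each class $K_j\cup B_j$ onto $L_j\cup B'_j$) to agree on $A_{T-1}$. Here only existence is claimed, and it follows directly from the construction above.
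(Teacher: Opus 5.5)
Your proof is correct, and your reading of the statement ($\pi_{M'_T}|_{A_{T-1}}=\pi_{M_{T-1}}|_{A_{T-1}}$) matches the conclusion the paper's proof actually reaches. Your route is different from the paper's. You observe that the monomial matrix built explicitly in Proposition~\ref{fact:equiv} already restricts to $\pi_{M_{T-1}}$ on $\{1,\dots,n\}$, so nothing more is needed.

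The paper instead starts from some equivalence of $G'_T,F'_T$, whose existence it also takes from Proposition~\ref{fact:equiv}. It then shows, using the layered sizes \eqref{eq:layers}, that the permutation of any such equivalence must carry each $\widehat K_j=K_j\cup B_j$ onto $\widehat L_j=L_j\cup B'_j$. Finally it invokes Lemma~\ref{lem:normalise} with bijections that fix $B_j$ and agree with $\pi_{M_{T-1}}$ on $A_{T-1}\cap K_j$, which is possible by Lemma~\ref{lem:consistency}.

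For the bare existence claim, that normalisation step is redundant, as you point out. Your argument is therefore shorter and needs neither Lemma~\ref{lem:normalise} nor Lemma~\ref{lem:consistency}.

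What the paper's argument buys is a rigidity statement: \emph{every} equivalence of the augmented pair can be normalised to respect the committed images. That is the mechanism actually needed in Proposition~\ref{prop:step}, where the equivalence comes from a \YES{} answer of the oracle and is otherwise unknown. The paper's proof of this proposition effectively rehearses that argument, and it is where \eqref{eq:layers} is established for later reuse.
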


\begin{proof}
In $G'_T$ the class of $g_{t_j}$ is $\widehat K_j := K_j\cup B_j$, of size $c_j+jm$. Indeed the columns $g_{t_1},\dots,g_{t_\ell}$ lie in pairwise distinct classes of $G$, so the block $B_j$ joins the class of $g_{t_j}$ and no other. Similarly the class of $f_{\pi_{M_{T-1}}(t_j)}$ in $F'_T$ is $\widehat L_j := L_j \cup B'_j$, of size $c_j + jm$. Every other class of $G'_T$ or $F'_T$ has size at most $m$. Since $1\le c_j\le m$ we have, for $1\le j<\ell$,
\begin{equation}\label{eq:layers}
c_j+jm \;\le\; m+jm \;=\; (j+1)m \;<\; (j+1)m + c_{j+1},
\end{equation}
so the class sizes $c_1+m<c_2+2m<\dots<c_\ell+\ell m$ are strictly increasing, therefore unique, and all exceed $m$. Hence the classes of size $>m$ of $G'_T$ are exactly the $\widehat K_j$, those of $F'_T$ are exactly the $\widehat L_j$, and in each case their sizes are pairwise distinct. Any permutation $\pi$ as part of a monomial realising an equivalence of $G'_T$ and $F'_T$ preserves class sizes by Lemma \ref{lem:invariance}, and therefore must satisfy $\pi(\widehat K_j) = \widehat L_j$ for every $j$. Such a $\pi$ exists by Proposition \ref{fact:equiv}.

We may now apply Lemma \ref{lem:normalise} with $\mathcal{T} = \{\widehat K_1,\dots,\widehat K_\ell\}$ and, for each $j$, a bijection $\sigma_j : \widehat K_j\to\widehat L_j$ chosen to be the identity on $B_j$ (mapping $B_j$ onto $B'_j$, which occupies the same positions) and to agree with $\pi_{M_{T-1}}$ on $A_{T-1}\cap K_j$. Such a $\sigma_j$ exists, as by Lemma \ref{lem:consistency}, $\pi_{M_{T-1}}$ maps $A_{T-1}\cap K_j$ injectively into $L_j$, and $|K_j|=|L_j|=c_j$, so the partial injection extends to a bijection $K_j\to L_j$. Since $A_{T-1}\subseteq\bigcup_{j\le\ell}K_j$, the resulting $\pi_{M'_T}$ agrees with $\pi_{M_{T-1}}$ on all of $A_{T-1}$.
\end{proof}

Now, let us take an element $t\notin A_{T-1}$ and consider the column $g_t$.

\medskip
Suppose $g_t\sim g_{t_{j}}$ for some $j\le\ell$, i.e.\ $t\in K_{j}$. Find an index $i \in L_{j}\setminus\pi_{M_{T-1}}(A_{T-1})$. Such an $i$ must exist, as by Lemma \ref{lem:consistency}, $\pi_{M_{T-1}}(A_{T-1})\cap L_{j} = \pi_{M_{T-1}}(A_{T-1}\cap K_{j})$, which has cardinality $|A_{T-1}\cap K_{j}|\le c_{j}-1$ since $t\in K_{j}\setminus A_{T-1}$, while $|L_{j}|=c_{j}$. Using Lemma \ref{lem:normalise} with $\mathcal{T}=\{K_{j}\}$ and a bijection $\sigma: K_{j}\to L_{j}$ agreeing with $\pi_{M_{T-1}}$ on $A_{T-1}\cap K_{j}$ and sending $t\mapsto i$, there exists $M_T$ realising the equivalence between $G$ and $F$ with $\pi_{M_T}\big|_{A_{T-1}}=\pi_{M_{T-1}}\big|_{A_{T-1}}$ and $\pi_{M_T}(t)=i$. Then extend $A_{T-1}$ as $A_T\leftarrow A_{T-1}\cup\{t\}$

\medskip
Let us now suppose that the class of $g_t$ is not represented. We append to $G'_T$ a number $(\ell+1)m$ of columns equal to $g_t$, obtaining $G''_T$, and to $F'_T$ we append $(\ell+1)m$ columns equal to $f_{i_t}$, obtaining $F''_T$. We write $B$ and $B'$ for the two new blocks of appended positions. Now we ask the oracle if the two codes generated by $G''_T$ and $F''_T$ are linearly equivalent. This construction is represented in Algorithm \ref{alg:BQ}.

\begin{algorithm}[htbp]
\caption{\textsc{BuildQuery}}
\label{alg:BQ}
\begin{algorithmic}[1]
\Require $G,F\in\F_q^{k\times n}$; the committed set $A_{T-1}$ with its values $(i_t)_{t\in A_{T-1}}$; the list of representatives $(t_1,\dots,t_\ell)$; $m=m(G)$; an index $t\in[n]\setminus A_{T-1}$; a candidate $i\in[n]\setminus\pi(A_{T-1})$.
\Ensure the pair $(G''_T,F''_T)$ constructed before Proposition \ref{prop:step}.
\Statex
\State $G'_T\gets G$, \quad $F'_T\gets F$
\For{$j=1,\dots,\ell$}
  \State $G'_T\gets G'_T\,\|\,g_{t_j}^{(jm)}$, \quad $F'_T\gets F'_T\,\|\,f_{i_{t_j}}^{(jm)}$
\EndFor
\State $G''_T\gets G'_T\,\|\,g_{t}^{((\ell+1)m)}$, \quad $F''_T\gets F'_T\,\|\,f_{i}^{((\ell+1)m)}$
\State \Return $(G''_T,F''_T)$
\end{algorithmic}
\end{algorithm}

\begin{proposition}\label{prop:step}

If the answer is NO, then $\pi_{M_T}(t)\neq i_t$ for any monomial matrix $M_T$ realising the equivalence between $G$ and $F$ and agreeing with $\pi_{M_{T-1}}$ on all inputs from $A_{T-1}$.
If the answer is YES, there are $S_T\in\GL_k(\F_q)$ and a monomial matrix $M_T$ with $S_TGM_T=F$, $\pi_{M_T}\big|_{A_{T-1}}=\pi_{M_{T-1}}\big|_{A_{T-1}}$ and $\pi_{M_T}(t)=i_t$.

\end{proposition}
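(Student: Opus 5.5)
I will mirror the proof of Proposition~\ref{prop:base}, using the layered class sizes of Proposition~\ref{fact:agree} to pin the committed indices.

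\emph{The NO direction (contrapositive).} Suppose some $M_T=P_TD_T$ with $S_TGM_T=F$ agrees with $\pi_{M_{T-1}}$ on $A_{T-1}$ and has $\pi_{M_T}(t)=i_t$. I will extend $\pi_{M_T}$ to $\{1,\dots,n+(\ell+1)m+\sum_j jm\}$ by the identity on all appended positions. For the diagonal, I put $d_{\pi_{M_T}(t_j)}$ on the positions of $B_j$ and $d_{i_t}$ on the positions of $B$, where $d$ is the diagonal of $D_T$. Since $\pi_{M_T}(t_j)=\pi_{M_{T-1}}(t_j)=i_{t_j}$, relation \eqref{eq:star} gives $d_{i_{t_j}}S_Tg_{t_j}=f_{i_{t_j}}$ and $d_{i_t}S_Tg_t=f_{i_t}$. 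Checking column by column as in Proposition~\ref{fact:equiv} then yields $S_TG''_TM''=F''_T$. So the oracle must answer YES, which is the contrapositive.

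\emph{The YES direction.} Let $SG''_TM''=F''_T$.

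\begin{itemize}
\item \emph{Classes of $G''_T$.} They are $\widehat K_j=K_j\cup B_j$ of size $c_j+jm$, the new class $K_t\cup B$ of size $c+(\ell+1)m$ (where $c=|K_t|$), and all remaining classes, of size at most $m$. Since the class of $g_t$ is not represented, $B$ merges with no $\widehat K_j$.
\item \emph{Classes of $F''_T$.} They are $\widehat L_j$, the class $L\cup B'$ of size $c'+(\ell+1)m$ (where $L$ is the class of $f_{i_t}$), and the remaining classes, of size at most $m$.
\item \emph{Candidate restriction.} Here I must use $i_t\notin\pi_{M_{T-1}}(A_{T-1})$, so that $L$ differs from every $L_j$. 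Otherwise $B'$ would merge into some $\widehat L_j$, and the size count would fail. If the candidate loop ranges over indices in classes $L$ distinct from all $L_j$, this holds. Since $\pi_{M_{T-1}}(A_{T-1})\subseteq\bigcup_j L_j$ by Lemma~\ref{lem:consistency}, that restriction is natural, and I will make it explicit.
\item \emph{Sizes.} By the chain \eqref{eq:layers} extended by $c_\ell+\ell m<(\ell+1)m+c$, the large classes on each side have pairwise distinct sizes, all greater than $m$.
\item \emph{Matching.} Lemma~\ref{lem:invariance} forces $\pi_{M''}(\widehat K_j)=\widehat L_j$ and $\pi_{M''}(K_t\cup B)=L\cup B'$, and hence $c=c'$.
\end{itemize}

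Next I apply Lemma~\ref{lem:normalise} with $\mathcal{T}$ consisting of all the large classes. On each $\widehat K_j$ I take $\sigma_j$ as in Proposition~\ref{fact:agree}: the identity on $B_j$, agreeing with $\pi_{M_{T-1}}$ on $A_{T-1}\cap K_j$, and extended to a bijection $K_j\to L_j$. On $K_t\cup B$ I take the identity $B\to B'$, map $t\mapsto i_t$, and choose any bijection $K_t\setminus\{t\}\to L\setminus\{i_t\}$, which exists because $c=c'$. This gives $M'''$ with $SG''_TM'''=F''_T$ that fixes every appended position.

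Every other class of $G''_T$ lies in $\{1,\dots,n\}$ and maps to a small class of $F''_T$, which also lies in $\{1,\dots,n\}$. Hence $\pi_{M'''}$ preserves $\{1,\dots,n\}$ and $M'''$ is block diagonal. Taking its top-left block $M_T$ and comparing the first $n$ columns gives $SGM_T=F$. This $M_T$ agrees with $\pi_{M_{T-1}}$ on $A_{T-1}$, because $A_{T-1}\subseteq\bigcup_j K_j$, and it satisfies $\pi_{M_T}(t)=i_t$.

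The main obstacle is the class bookkeeping: the appended block $B'$ must form a genuinely new class in $F''_T$, and that is exactly where the restriction on the candidate $i_t$ is needed.
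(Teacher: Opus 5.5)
Your NO direction and most of your YES direction (the class sizes, the matching forced by Lemma~\ref{lem:invariance}, the normalisation, and the block-diagonal restriction) match the paper's proof. The gap is at your ``candidate restriction'' step.

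The inference ``$i_t\notin\pi_{M_{T-1}}(A_{T-1})$, so $L$ differs from every $L_j$'' is false. Lemma~\ref{lem:consistency} only gives $\pi_{M_{T-1}}(A_{T-1})\subseteq\bigcup_j L_j$, which is the wrong direction for that implication. Moreover, $L_j\setminus\pi_{M_{T-1}}(A_{T-1})$ is nonempty whenever $K_j\not\subseteq A_{T-1}$. Such candidates really are queried: Algorithm~\ref{alg:perm} loops over all $i\in[n]\setminus\pi(A_{T-1})$, and $K_j$ may contain indices larger than $t$. The proposition must therefore cover them. Restricting the candidate loop to indices outside $L_1\cup\dots\cup L_\ell$ proves a weaker statement about a modified algorithm, not the proposition as stated.

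The missing idea is that a YES answer itself rules out $L\in\{L_1,\dots,L_\ell\}$. You half-state this when you write ``the size count would fail'', but the right conclusion is that the oracle answers NO, not that the candidate must be excluded in advance. Concretely, suppose $f_{i_t}\sim f_{i_{t_j}}$. Then $L_j\cup B'_j\cup B'$ is a single class of $F''_T$. So $F''_T$ has exactly $\ell$ classes of size $>m$: this merged class and the $\widehat L_{j'}$ for $j'\neq j$. Every other class of $F''_T$ is a class of $F$, of size at most $m$. By contrast, $G''_T$ has $\ell+1$ classes of size $>m$, because $K_t\neq K_1,\dots,K_\ell$. Linearly equivalent matrices have equal multisets of class sizes by Lemma~\ref{lem:invariance}, so no equivalence exists and the answer is NO. Hence a YES answer forces $L$ to be a new class, and from there your argument goes through unchanged. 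With this step inserted, your proof coincides with the paper's.
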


\begin{proof}
For the first claim, we again suppose the contrary. If $\pi_{M_{T-1}}$ could be extended by an $M_T$ with $\pi_{M_T}(t)=i_t$, write $M_T=P_TD_T$. Construct $M''_T$ by letting its permutation be $\pi_{M_T}$ on $\{1,\dots,n\}$ and the identity on the appended positions, and its diagonal entries be those of $D_T$ on $\{1,\dots,n\}$, equal to $d_{\pi_{M_{T-1}}(t_j)}$ on $B_j$, and equal to $d_{i_t}$ on $B$. As in the proof of Proposition \ref{fact:equiv} one checks column by column that $S_TG''_TM''_T=F''_T$. So the answer of the oracle could not have been negative.

If the answer is positive, then we have to prove that the partially identified permutation we have previously generated can be extended by setting $\pi_{M_T}(t)=i_t$. Let $M''$ be a monomial matrix realising the equivalence, so that $S''G''_TM'' = F''_T$ for some invertible $S''$, and let $\pi'' := \pi_{M''}$. From $\pi''$, we need to arrive at a permutation that not only realises the equivalence between $G$ and $F$, but also agrees with $\pi_{M_{T-1}}$ on $A_{T-1}$, maps $t$ to $i_t$, and maps $\{1,\dots,n\}$ to $\{1,\dots,n\}$.

By hypothesis the class $K$ of $g_t$ in $\K(G)$ is distinct from $K_1,\dots,K_\ell$, so, writing $c := |K|$, the class of $g_t$ in $G''_T$ is $\widehat K := K\cup B$, of size $c+(\ell+1)m$. Together with \eqref{eq:layers} and $c\ge 1$ we obtain
\[
c_1+m \;<\; c_2+2m \;<\;\cdots\;<\; c_\ell+\ell m \;\le\; (\ell+1)m \;<\; c+(\ell+1)m ,
\]
so that $G''_T$ has exactly $\ell+1$ classes of size $>m$, namely $\widehat K_1,\dots,\widehat K_\ell,\widehat K$, and their sizes are pairwise distinct.

We claim that $f_{i_t}\not\sim f_{\pi_{M_{T-1}}(t_j)}$ for every $j\le\ell$. Suppose the contrary, say $f_{i_t}\sim f_{\pi_{M_{T-1}}(t_{i})}$. Then $L_{i}$, $B'_{i}$ and $B'$ all lie in a single class of $F''_T$, and the classes of $F''_T$ of size $>m$ are the $\widehat L_j$ for $j\neq i$ together with that merged class, that is, $\ell$ classes in total; every remaining class of $F''_T$ is a class of $F$ and has size at most $m$. This contradicts the equality of the multisets of class sizes of $G''_T$ and $F''_T$ provided by Lemma \ref{lem:invariance}, since $G''_T$ has $\ell+1$ classes of size $>m$. Hence, writing $L$ for the class of $f_{i_t}$ in $\K(F)$ and $c' := |L|$, the classes of $F''_T$ of size $>m$ are exactly $\widehat L_1,\dots,\widehat L_\ell$ and $\widehat L := L\cup B'$, of size $c'+(\ell+1)m$.

By looking at the unique size of these classes, and since $\pi''$ preserves class sizes, we necessarily have $c=c'$ and
\[
\pi''(\widehat K_j) = \widehat L_j \ \ (1\le j\le\ell), \qquad \pi''(\widehat K) = \widehat L .
\]

We may therefore apply Lemma \ref{lem:normalise} to $G''_T,F''_T, S'',M''$ with $\mathcal{T}=\{\widehat K_1,\dots,\widehat K_\ell,\widehat K\}$ and the following bijections. For $j\le\ell$, take $\sigma_j : \widehat K_j\to\widehat L_j$ equal to the identity on $B_j$, agreeing with $\pi_{M_{T-1}}$ on $A_{T-1}\cap K_j$, and arbitrary from the remaining elements of $K_j$ onto the remaining elements of $L_j$. This is possible by Lemma \ref{lem:consistency} together with $|K_j|=|L_j|$. Take $\sigma:\widehat K\to\widehat L$ equal to the identity on $B$, sending $t\mapsto i_t$, and arbitrary from $K\setminus\{t\}$ onto $L\setminus\{i_t\}$; this is possible since $|K|=c=c'=|L|$, and no constraint coming from $\pi_{M_{T-1}}$ arises because $K$ contains no element of $A_{T-1}$.

Let $M''$ be the monomial matrix produced by Lemma \ref{lem:normalise} and $\pi'' := \pi_{M''}$. We have to check that $\pi''$ maps $\{1,\dots,n\}$ onto itself. Every appended position of $G''_T$ lies in one of the classes of $\mathcal{T}$, and is fixed by $\pi''$ by construction. Every index of $K_j$ is sent into $L_j\subseteq\{1,\dots,n\}$, and every index of $K$ into $L\subseteq\{1,\dots,n\}$. Finally, any other class $K'$ of $G''_T$ is contained in $\{1,\dots,n\}$ and has size at most $m$, so $\pi''(K')$ is a class of $F''_T$ of size at most $m$, hence is none of $\widehat L_1,\dots,\widehat L_\ell,\widehat L$. Since all appended positions of $F''_T$ lie in those classes, $\pi''(K')\subseteq\{1,\dots,n\}$. Thus $\pi''(\{1,\dots,n\})\subseteq\{1,\dots,n\}$, and equality follows by cardinality.

Consequently $M''$ is block diagonal with respect to the splitting of the coordinates into $\{1,\dots,n\}$ and the appended positions. Let $M_T$ be its top left $n\times n$ block, so that $\pi_{M_T}=\pi''|_{\{1,\dots,n\}}$. Comparing the $j$-th columns of $S''G''_TM''=F''_T$ for $j\le n$, and using $(\pi'')^{-1}(j)\le n$, gives $S''GM_T=F$. By construction $\pi_{M_T}(t)=\sigma(t)=i_t$ and, for $a\in A_{T-1}$, $a$ lies in some $K_j$ and $\pi_{M_T}(a)=\sigma_j(a)=\pi_{M_{T-1}}(a)$. Setting $S_T := S''$ concludes the proof.
\end{proof}

The procedure is summarised in ALgorithm \ref{alg:perm}.

\begin{algorithm}[htbp]
\caption{\textsc{RecoverPermutation}}
\label{alg:perm}
\begin{algorithmic}[1]
\Require $G,F\in\F_q^{k\times n}$ of full row rank with no zero column, such that $(G,F)$ is a \YES{} instance of LCE; an oracle $\mathcal{O}_\mathrm{LCE}$ for LCE.
\Ensure $\pi\in\Sn$ such that $SGM=F$ for some $S\in\GL_k(\F_q)$ and some $M\in\Mon_n$ with $\pi_M=\pi$.
\Statex
\State compute $\K(G)$ and $\K(F)$; \quad $m\gets m(G)$ \Comment{computed once}
\State $A_0\gets\emptyset$, \quad $\ell\gets 0$
\For{$T=1,\dots,n$}
  \State $t\gets\min\big([n]\setminus A_{T-1}\big)$
  \If{$g_t\sim g_{t_j}$ for some $j\le\ell$}
    \State $L_j\gets$ the class of $f_{i_{t_j}}$ in $\K(F)$
    \State $i_t\gets\min\big(L_j\setminus\pi(A_{T-1})\big)$
  \Else
    \For{$i\in[n]\setminus\pi(A_{T-1})$ in increasing order}
      \State $(G''_T,F''_T)\gets\Call{BuildQuery}{G,F,A_{T-1}, (t_1,\dots,t_\ell),m,t,i}$
      \If{$\mathcal{O}_\mathrm{LCE}\big(G''_T,F''_T\big)=\YES$}
        \State $i_t\gets i$, \quad $\ell\gets\ell+1$, \quad $t_\ell\gets t$
        \State \Break
      \EndIf
    \EndFor
  \EndIf
  \State $A_T\gets A_{T-1}\cup\{t\}$
\EndFor
\State \Return the permutation $\pi:t\mapsto i_t$
\end{algorithmic}
\end{algorithm}

\subsection{Cost}

To conclude the recovery of the permutation, we simply iterate these steps until $A_T=\{1,\dots,n\}$. At each step the index $t$ is chosen to be the smallest element not yet in $A_{T-1}$. If the class of $g_t$ is already represented, it is committed without any oracle calls, and otherwise the candidates $i_t$ are tried in increasing order among the indices not already accounted for, the first positive answer being retained. At the end, we have identified one of the potential permutations realising an equivalence between $G$ and $F$.

\begin{corollary}\label{cor:cost}
The above procedure terminates and performs at most $n^2$ calls to an oracle to LCE with input codes of dimension $k$ and length at most
\[
n + \frac{2}{27}n^3 + O(n^2)
\]
over $\F_q$, and returns a permutation $\pi\in\Sn$ such that there exist an invertible matrix $S$ and a monomial matrix $M$ with $\pi_M=\pi$ and $SGM=F$.
\end{corollary}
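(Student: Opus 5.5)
The plan has three parts: correctness by induction, a count of oracle calls, and a bound on the length of the queried codes.

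\emph{Correctness.} I would prove by induction on $T$ the invariant that there exist $S_T\in\GL_k(\F_q)$ and $M_T\in\Mon_n$ with $S_TGM_T=F$ and $\pi_{M_T}(t)=i_t$ for all $t\in A_T$. The base case $A_0=\emptyset$ holds because $(G,F)$ is a \YES{} instance.

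For the inductive step there are two cases. If the class of $g_t$ is already represented, the argument following Proposition \ref{fact:agree}, built on Lemma \ref{lem:normalise} and Lemma \ref{lem:consistency}, shows that the committed choice $i_t=\min(L_j\setminus\pi(A_{T-1}))$ extends the invariant. If the class is not represented, the invariant says that some $M_T$ extending $\pi_{M_{T-1}}|_{A_{T-1}}$ exists. Its value $i^\ast=\pi_{M_T}(t)$ lies outside $\pi(A_{T-1})$, so it is among the candidates scanned. By the first half of Proposition \ref{prop:step}, the query for $i^\ast$ returns \YES{}, so the inner loop terminates at some candidate $i\le i^\ast$. By the second half of Proposition \ref{prop:step}, the first \YES{} obtained gives a valid extension.

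After $n$ steps $A_n=[n]$, and the invariant yields $S,M$ with $\pi_M=\pi$. Termination is immediate because both loops are finite.

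\emph{Oracle count.} Each outer step makes at most $n-(T-1)\le n$ oracle calls, and there are $n$ steps. This gives at most $n^2$ calls, plus the single preliminary call, which can be absorbed into the bound.

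\emph{Query length.} This is the part needing care. The query has length
\[
n+m\big(1+2+\dots+\ell\big)+(\ell+1)m=n+m\frac{(\ell+1)(\ell+2)}{2}.
\]
Here $\ell+1$ is at most the number $r$ of distinct classes of $G$, and the classes partition $[n]$ with the largest of size $m$. So I would maximise $m\,r^2/2$ subject to $m+(r-1)\le n$, since one class has size $m$ and the others have size at least $1$.

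Setting $r\approx n-m$ gives the function $m(n-m)^2/2$, which is maximised at $m=n/3$. The value there is $\frac{1}{2}\cdot\frac{n}{3}\cdot\frac{4n^2}{9}=\frac{2}{27}n^3$. The lower-order terms (the $+1,+2$ in the product and the rounding of $m$) contribute only $O(n^2)$.

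The one subtle point is coupling $\ell$ and $m$ through the partition constraint, rather than bounding each separately by $n$, which would only give a length bound of $O(n^3/2)$.
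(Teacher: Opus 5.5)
Your proposal is correct and follows the paper's proof. The shared ingredients are correctness via Lemma \ref{lem:normalise}, Lemma \ref{lem:consistency} and Proposition \ref{prop:step} (which you usefully spell out as an explicit induction), the same $n^2$ oracle count, and the same maximisation of $n+m(\ell+1)(\ell+2)/2$ under $m\le n-r+1$; you only parametrise by $m$ (optimum $m=n/3$) rather than by the number of classes (optimum $c=2n/3$).

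You leave two small points implicit. Each query still generates a code of dimension $k$, because the appended columns are copies of columns of a rank-$k$ matrix. Also, the preliminary oracle call does not need to be absorbed, since it is not part of the procedure the statement counts.
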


\begin{proof}
Correctness has already been discussed. The second case, where the column's class is not represented, occurs once for each class of $\K(G)$, hence at most $c := |\K(G)|\le n$ times, and each occurrence tries at most $n-|A_{T-1}|\le n$ candidates; The other case costs nothing in terms of oracle calls. This gives at most $cn\le n^2$ oracle calls.

A query built at a step with $\ell$ representatives has $n+m\big(1+2+\dots+\ell\big)+(\ell+1)m = n+m(\ell+1)(\ell+2)/2$ columns. Since $\ell+1\le c$, this is at most $n+m\,c(c+1)/2$. The classes partition $\{1,\dots,n\}$ into $c$ nonempty parts, one of which has $m$ elements, so $m\le n-c+1$; maximising $(n-c+1)c(c+1)/2$ over $c$ gives the stated bound, the maximum being attained near $c=2n/3$. All queries have $k$ rows, and appending copies of existing columns to a matrix of full row rank preserves the rank.
\end{proof}

\section{Recovery of the diagonal component}
\label{sec:diagonal}

Throughout this section $G,F\in\F_q^{k\times n}$ are generator matrices of full row rank $k$ and without zero columns, and we assume that they are linearly equivalent, i.e.\ that
\[
SGM=F \qquad\text{for some } S\in\GL_k(\F_q) \text{ and some monomial } M=PD,
\]
with $P$ a permutation matrix and $D$ an invertible diagonal matrix. We assume further that $P$ is known; it is the output of the reduction of the previous section. We show that $D$, and then $S$, can be recovered by deterministic linear algebra, without any further oracle access and in time polynomial in $n$, $k$ and $\log q$.

That the diagonal part of a linear equivalence carries no computational hardness of its own is by now a known fact, and several routes to it are available in the literature.

The first route is through canonical forms for the action of the group of invertible diagonal matrices. In \cite{CPS25}, the authors construct such canonical forms and use them to shorten LESS signatures. In \cite{DMS26}, the authors place the construction in a general framework, showing that for a group $\mathcal{G}=\mathcal{G}_1\rtimes \mathcal{G}_2$ acting on a set $X$, a polynomial-time canonical form for the classes of $X/\mathcal{G}_1$ makes the inversion problem for the induced action of $\mathcal{G}_2$ on $X/\mathcal{G}_1$ equivalent to the original inversion problem. Taking $\mathcal{G}_1$ to be the diagonal group and $\mathcal{G}_2=\Sn$ gives exactly the statement that recovering $P$ suffices. A second route is computational and direct. In \cite{BMS25} the authors note that, once the permutation is known, the scalars are obtained by solving an overdetermined linear system. Alecci and D'Alconzo \cite{AD26} record the reduction of the recovery of $M=DP$ to the recovery of $P$ alone as a known fact and refer to the canonical forms above.

Both routes carry hypotheses that we would rather not impose. The canonical forms of \cite{CPS25} are not defined for every input, so further arguments are needed.

We take a simpler route. We observe that, after passing to systematic form with respect to an information set, the recovery of $D$ becomes verbatim an instance of the {diagonal equivalence} problem for rectangular matrices, solved in complete generality over an arbitrary field in \cite{EngSch}. Their algorithm is deterministic, produces a nonsingular solution by construction, and characterises the entire solution set. The idea of working with the action of the monomial map on an information set is taken from the IS-LEP formulation of \cite{PS23}, where it is used to compress the responses of the LESS identification scheme.

\subsection{Reduction to a diagonal equivalence problem}

Set $A:=GP$, so that
\begin{equation}\label{eq:AD}
F=SGPD=SAD .
\end{equation}
Since $P$ is known, $A$ is too. Note that $A$ has full row rank $k$ and no zero column, both properties being invariant under a permutation of columns.

Let $\mathcal{I}\subseteq[n]$ with $|\mathcal{I}|=k$ be an information set for $A$, that is, a set of indices for which the submatrix $A_{\mathcal{I}}$ of the corresponding columns is invertible; such an $\mathcal{I}$ exists and is found by Gaussian elimination. In this section, for any $\mathcal{K}\subseteq[n]$ we write $A_{\mathcal{K}}$ for the submatrix of the columns indexed by $\mathcal{K}$, and $D_{\mathcal{K}}$ for the diagonal submatrix of $D$ on the rows and columns indexed by $\mathcal{K}$.

\begin{lemma}\label{lem:IS}
$F_{\mathcal{I}}=S\,A_{\mathcal{I}}\,D_{\mathcal{I}}$. In particular every information set for $A$ is an information set for $F$, and conversely.
\end{lemma}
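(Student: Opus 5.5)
The plan is to read off the first claim column by column from \eqref{eq:AD}, and then get the ``in particular'' part from invertibility of $S$ and $D_{\mathcal{I}}$.

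For the first claim, \eqref{eq:AD} gives $F = SAD$ with $D$ diagonal. By Lemma \ref{lem:matrix}(i), applied with trivial permutation, right multiplication by $D$ scales the $j$-th column by $d_j$. Hence $f_j = d_j\,S a_j$ for every $j\in[n]$. Restricting to the columns indexed by $\mathcal{I}$ yields $F_{\mathcal{I}} = S A_{\mathcal{I}} D_{\mathcal{I}}$. The point is that column selection commutes with left multiplication by $S$, and that the diagonal scaling acts columnwise, so only the entries $d_j$ with $j\in\mathcal{I}$ appear. I will write this identity out explicitly for an arbitrary index set $\mathcal{K}\subseteq[n]$ with $|\mathcal{K}|=k$, namely
\[
F_{\mathcal{K}} = S A_{\mathcal{K}} D_{\mathcal{K}},
\]
since the ``conversely'' direction needs it for sets other than $\mathcal{I}$.

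For the second claim, take any $\mathcal{K}$ with $|\mathcal{K}|=k$. Then
\[
\det F_{\mathcal{K}} = \det S\cdot\det A_{\mathcal{K}}\cdot\prod_{j\in\mathcal{K}} d_j .
\]
Here $\det S\neq 0$ because $S\in\GL_k(\F_q)$, and each $d_j\neq 0$ because $M$ is monomial. Therefore $F_{\mathcal{K}}$ is invertible if and only if $A_{\mathcal{K}}$ is. This shows that $\mathcal{K}$ is an information set for $A$ exactly when it is one for $F$, which gives both directions at once. Alternatively, one can invert the relation as $A_{\mathcal{K}} = S^{-1}F_{\mathcal{K}}D_{\mathcal{K}}^{-1}$ and argue symmetrically.

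I expect no real obstacle. The only care needed is bookkeeping: $D_{\mathcal{K}}$ must be understood as the $k\times k$ diagonal submatrix, and one should be careful that $A=GP$ already absorbs the permutation. This is why the relation is $F=SAD$, so the column indices of $F$ and $A$ match directly, with no $\pi^{-1}$ appearing.
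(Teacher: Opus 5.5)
Your proposal is correct and follows the paper's proof: read $F_{\mathcal{I}}=SA_{\mathcal{I}}D_{\mathcal{I}}$ off columnwise from $F=SAD$, then use invertibility of $S$ and $D_{\mathcal{I}}$. The only difference is cosmetic. You phrase the second step via determinants for an arbitrary $k$-subset $\mathcal{K}$, which gives both directions at once. The paper instead proves the forward direction for $\mathcal{I}$ and gets the converse by applying the same argument to $S^{-1}FD^{-1}=A$.
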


\begin{proof}
Because $D$ is diagonal, right multiplication by $D$ scales columns individually, so $(AD)_{\mathcal{I}}=A_{\mathcal{I}}D_{\mathcal{I}}$. Hence by \eqref{eq:AD}, $F_{\mathcal{I}}=(SAD)_{\mathcal{I}}=S(AD)_{\mathcal{I}} =SA_{\mathcal{I}}D_{\mathcal{I}}$. As $S$ and $D_{\mathcal{I}}$ are invertible, $F_{\mathcal{I}}$ is invertible if and only if $A_{\mathcal{I}}$ is. The converse statement follows by applying the same argument to $S^{-1}FD^{-1}=A$.
\end{proof}

We may assume that $\mathcal{I}=[k]$. Indeed, let $\sigma$ be a permutation matrix carrying $\mathcal{I}$ onto $[k]$; replacing $A$ by $A\sigma$ and $F$ by $F\sigma$ turns \eqref{eq:AD} into $F\sigma=S(A\sigma)(\sigma^{-1}D\sigma)$, and $\sigma^{-1}D\sigma$ is again an invertible diagonal matrix, obtained from $D$ by the permutation of its entries induced by $\sigma$. Since $\sigma$ is known, recovering $\sigma^{-1}D\sigma$ recovers $D$. We therefore fix $\mathcal{I}=[k]$ and put $\mathcal{J}:=[n]\setminus\mathcal{I}=\{k+1,\dots,n\}$, and we write
\[
D=\begin{pmatrix} D_{\mathcal{I}} & 0\\ 0 & D_{\mathcal{J}}\end{pmatrix}, \qquad D_{\mathcal{I}}=\diag(d_1,\dots,d_k), \qquad D_{\mathcal{J}}=\diag(d_{k+1},\dots,d_n).
\]

Putting the two matrices in systematic form, we have:
\[
A':=A_{\mathcal{I}}^{-1}A=(\,I_k \mid A''\,), \qquad F':=F_{\mathcal{I}}^{-1}F=(\,I_k \mid F''\,), \qquad S':=F_{\mathcal{I}}^{-1}SA_{\mathcal{I}},
\]
where $I_k$ denotes the $k\times k$ identity matrix and $A'',F''\in\F_q^{k\times(n-k)}$. Both systematic forms exist by Lemma \ref{lem:IS}, and $S'$ is invertible.

\begin{proposition}\label{prop:reduce}
With the notation above,
\begin{equation}\label{eq:diageq}
S'=D_{\mathcal{I}}^{-1} \qquad\text{and}\qquad D_{\mathcal{I}}^{-1}\,A''\,D_{\mathcal{J}}=F'' .
\end{equation}
Consequently $S=F_{\mathcal{I}}\,D_{\mathcal{I}}^{-1}\,A_{\mathcal{I}}^{-1}$.
\end{proposition}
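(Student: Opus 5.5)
The plan is to left-multiply \eqref{eq:AD} by $F_{\mathcal{I}}^{-1}$, rewrite the right-hand side in terms of $A'$, and then compare the two column blocks indexed by $\mathcal{I}$ and $\mathcal{J}$.

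First, multiply $F=SAD$ on the left by $F_{\mathcal{I}}^{-1}$ and insert $A_{\mathcal{I}}A_{\mathcal{I}}^{-1}$ between $S$ and $A$. This gives
\[
F' \;=\; F_{\mathcal{I}}^{-1}F \;=\; \big(F_{\mathcal{I}}^{-1}SA_{\mathcal{I}}\big)\big(A_{\mathcal{I}}^{-1}A\big)D \;=\; S'A'D .
\]
Invertibility of $S'$ follows from Lemma \ref{lem:IS}, since $F_{\mathcal{I}}$, $S$ and $A_{\mathcal{I}}$ are all invertible.

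Next, compare the blocks of $F'=S'A'D$. Because $D$ is block diagonal with respect to $\mathcal{I}\cup\mathcal{J}$ and $A'=(I_k\mid A'')$, we have
\[
S'A'D=\big(S'D_{\mathcal{I}}\;\big|\;S'A''D_{\mathcal{J}}\big).
\]
Matching this with $F'=(I_k\mid F'')$, the $\mathcal{I}$-block gives $S'D_{\mathcal{I}}=I_k$, that is, $S'=D_{\mathcal{I}}^{-1}$. Substituting this into the $\mathcal{J}$-block gives $D_{\mathcal{I}}^{-1}A''D_{\mathcal{J}}=F''$, which is \eqref{eq:diageq}.

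Finally, solve the definition $S'=F_{\mathcal{I}}^{-1}SA_{\mathcal{I}}$ for $S$. This yields $S=F_{\mathcal{I}}S'A_{\mathcal{I}}^{-1}=F_{\mathcal{I}}D_{\mathcal{I}}^{-1}A_{\mathcal{I}}^{-1}$.

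There is no real obstacle here. The only point needing care is the bookkeeping after the reduction to $\mathcal{I}=[k]$: the same permutation $\sigma$ must be applied to both $A$ and $F$, so that the systematic blocks really sit in the first $k$ columns and $D$ splits as $\diag(D_{\mathcal{I}},D_{\mathcal{J}})$.
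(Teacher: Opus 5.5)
Your proof is correct and is the paper's argument essentially verbatim. You establish $F'=S'A'D$ by inserting $A_{\mathcal{I}}A_{\mathcal{I}}^{-1}$, compare the $\mathcal{I}$- and $\mathcal{J}$-blocks to get $S'=D_{\mathcal{I}}^{-1}$ and $D_{\mathcal{I}}^{-1}A''D_{\mathcal{J}}=F''$, and then solve the definition of $S'$ for $S$. Your bookkeeping remark about applying the same $\sigma$ to both matrices, so that $D$ still splits as $\diag(D_{\mathcal{I}},D_{\mathcal{J}})$, matches the paper's reduction to $\mathcal{I}=[k]$.
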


\begin{proof}
We first check that $S'A'D=F'$:
\[
S'A'D =F_{\mathcal{I}}^{-1}SA_{\mathcal{I}}\cdot A_{\mathcal{I}}^{-1}A\cdot D =F_{\mathcal{I}}^{-1}SAD =F_{\mathcal{I}}^{-1}F =F' ,
\]
using \eqref{eq:AD} in the third step. Now, we expand both sides blockwise. Since $A'=(I_k\mid A'')$ and $D$ is block diagonal with respect to $\mathcal{I}\cup\mathcal{J}$,
\[
A'D=(\,I_kD_{\mathcal{I}}\mid A''D_{\mathcal{J}}\,) =(\,D_{\mathcal{I}}\mid A''D_{\mathcal{J}}\,), \qquad\text{so}\qquad S'A'D=(\,S'D_{\mathcal{I}}\mid S'A''D_{\mathcal{J}}\,).
\]
Comparing with $F'=(I_k\mid F'')$ block by block gives $S'D_{\mathcal{I}}=I_k$, whence $S'=D_{\mathcal{I}}^{-1}$, and $S'A''D_{\mathcal{J}}=F''$, which upon substituting $S'$ is the second identity of \eqref{eq:diageq}. Finally, from $S'=F_{\mathcal{I}}^{-1}SA_{\mathcal{I}}$ we get $S=F_{\mathcal{I}}S'A_{\mathcal{I}}^{-1} =F_{\mathcal{I}}D_{\mathcal{I}}^{-1}A_{\mathcal{I}}^{-1}$.
\end{proof}

Thus, we are only left with the task of finding $D$ itself. The second identity in \eqref{eq:diageq} is precisely an instance of the diagonal equivalence problem, formulated and solved efficiently in \cite[\S 4.4]{EngSch}.

\begin{definition}[Diagonal equivalence]\label{def:diageq}
Let $A,B\in\F^{r\times s}$. Then $A$ is {diagonally equivalent} to $B$ if there exist invertible diagonal matrices $X\in\F^{r\times r}$ and $Y\in\F^{s\times s}$ with $XAY^{-1}=B$.
\end{definition}

Taking $r=k$, $s=n-k$, $A=A''$, $B=F''$, $X=D_{\mathcal{I}}^{-1}$ and $Y=D_{\mathcal{J}}^{-1}$, the identity $D_{\mathcal{I}}^{-1}A''D_{\mathcal{J}}=F''$ reads $XA''Y^{-1}=F''$. Thus:

\begin{corollary}\label{cor:reduction}
The recovery of $D$ from $P$ reduces, in $O(nk^2)$ field operations, to solving one instance of the diagonal equivalence problem for a pair of matrices in $\F_q^{k\times(n-k)}$. Any solution $(X,Y)$ of that instance yields $D=\diag(X^{-1},Y^{-1})$ and then $S=F_{\mathcal{I}}\,X\,A_{\mathcal{I}}^{-1}$.
\end{corollary}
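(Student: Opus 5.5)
The plan is to verify the two claims of Corollary \ref{cor:reduction} separately: the cost bound, and that any solution of the diagonal instance yields a valid pair $(S,M)$. The second is the substantive part. We must not rely on uniqueness of $D$, because the solver may return a solution $(X,Y)$ different from $(D_{\mathcal{I}}^{-1},D_{\mathcal{J}}^{-1})$.

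\textbf{Cost.} Computing $A=GP$ is a column permutation and costs no field operations. Finding an information set $\mathcal{I}$ of $A$ by Gaussian elimination costs $O(nk^2)$. The normalisation $\mathcal{I}=[k]$ is another column permutation. Next we invert $A_{\mathcal{I}}$ and $F_{\mathcal{I}}$; this is possible by Lemma \ref{lem:IS} and costs $O(k^3)\subseteq O(nk^2)$, since $k\le n$. Forming $A'=A_{\mathcal{I}}^{-1}A$ and $F'=F_{\mathcal{I}}^{-1}F$ costs $O(nk^2)$. By Proposition \ref{prop:reduce} the true $D$ gives a solution $(D_{\mathcal{I}}^{-1},D_{\mathcal{J}}^{-1})$ of the instance $(A'',F'')$, so the instance is solvable.

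\textbf{Correctness for an arbitrary solution.} Let $(X,Y)$ be any pair of invertible diagonal matrices with $XA''Y^{-1}=F''$. Set $D^\ast:=\diag(X^{-1},Y^{-1})$ and $S^\ast:=F_{\mathcal{I}}XA_{\mathcal{I}}^{-1}$. Then $S^\ast$ is invertible and $PD^\ast$ is monomial. We check $S^\ast A D^\ast=F$ by running the blockwise computation of Proposition \ref{prop:reduce} in reverse:
\[
S^\ast AD^\ast = F_{\mathcal{I}}\,X\,A'\,D^\ast = F_{\mathcal{I}}\,X\,(\,X^{-1}\mid A''Y^{-1}\,) = F_{\mathcal{I}}\,(\,I_k\mid F''\,)=F_{\mathcal{I}}F'=F .
\]
Since $A=GP$, this gives $S^\ast G(PD^\ast)=F$. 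If we had replaced $A,F$ by $A\sigma,F\sigma$ to move $\mathcal{I}$ to $[k]$, we undo that step by conjugating $D^\ast$ back with $\sigma$, exactly as in the paragraph preceding Proposition \ref{prop:reduce}.

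\textbf{Main obstacle.} The only delicate point is this non-uniqueness: the corollary asserts that \emph{any} solution works, not only the one coming from the original $D$. The reverse computation above settles it. The remaining bookkeeping, the permutation $\sigma$ and the identification $X\leftrightarrow D_{\mathcal{I}}^{-1}$, is routine.
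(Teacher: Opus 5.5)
Your proposal is correct and follows the paper's route. The paper obtains the corollary by reading the second identity of Proposition~\ref{prop:reduce} as the diagonal-equivalence instance $XA''Y^{-1}=F''$ with $X=D_{\mathcal{I}}^{-1}$, $Y=D_{\mathcal{J}}^{-1}$, and by counting the $O(nk^2)$ cost of the systematic forms. Your explicit reverse blockwise check $S^\ast AD^\ast=F_{\mathcal{I}}X(X^{-1}\mid A''Y^{-1})=F$ goes further than the paper: Proposition~\ref{prop:reduce} only proves the forward direction, and the paper leaves implicit that an arbitrary, non-unique solution $(X,Y)$ also gives a valid equivalence. Your argument supplies that converse.
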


\subsection{The Engel--Schneider algorithm}

Engel and Schneider \cite{EngSch} solve Definition \ref{def:diageq} over any abelian group with zero. We give below a short summary of their algoirthm.

Consider the identity $D_{\mathcal{I}}^{-1}A''D_{\mathcal{J}}=F''$ entrywise. For $i\in\mathcal{I}$ and $j\in\mathcal{J}$ its $(i,j)$ entry is $d_i^{-1}A''_{ij}d_j$, so the identity is equivalent to the following system of constraints:
\begin{equation}\label{eq:ratios}
A''_{ij}\,\frac{d_j}{d_i}=F''_{ij} \qquad\qquad (i\in\mathcal{I},\ j\in\mathcal{J}).
\end{equation}
Note that a position with $A''_{ij}=0$ imposes $F''_{ij}=0$ as well; and at a position with $A''_{ij}\neq 0$, we have \emph{ratio}
\begin{equation}\label{eq:weight}
\frac{d_j}{d_i}=w_{ij}, \qquad w_{ij}:=\frac{F''_{ij}}{A''_{ij}}\in\F_q^{*} .
\end{equation}
We then build a bipartite graph $\mathcal{B}$ on the vertex set $\mathcal{I}\cup\mathcal{J}=[n]$ having an edge $\{i,j\}$ exactly when $A''_{ij}\neq 0$, each edge is labelled with the weight $w_{ij}$.

Next, we run a breadth-first search on $\mathcal{B}$ to obtain a spanning forest (a spanning tree for each connected components of the graph) together with one root $r_1,\dots,r_t$ for each of the $t$ connected components. BFS runs in time which is linear in the size of $\mathcal{B}$, and produces an ordering for each spanning tree.

We set $d_{r_p}:=1$ for each root. Then we sweep the vertices in BFS order. When we reach a vertex $v$ from an already assigned neighbour $u$, we set
\[
d_v:=w_{uv}\,d_u \quad \text{if } u\in\mathcal{I},\ v\in\mathcal{J}, \qquad\qquad d_v:=d_u/w_{vu} \quad \text{if } u\in\mathcal{J},\ v\in\mathcal{I},
\]
according to \eqref{eq:weight}.

We set $D:=\diag(d_1,\dots,d_n)$ and then calculate $S=F_{\mathcal{I}}D_{\mathcal{I}}^{-1}A_{\mathcal{I}}^{-1}$ by Proposition \ref{prop:reduce}. Correctness is guaranteed by \cite[Thm.~3.8, Cor.~3.11]{EngSch}.

The Engel-Schneider process for solving diagonal equivalence is summarised in Algorithm \ref{alg:diag}, and the complete reduction, with all pre and post processing steps is summarised in Algorithm \ref{alg:full}.

\begin{algorithm}[htbp]
\caption{\textsc{DiagonalEquivalence}}
\label{alg:diag}
\begin{algorithmic}[1]
\Require $A'',F''\in\F_q^{k\times(n-k)}$, with rows indexed by $\mathcal{I}=[k]$ and columns by $\mathcal{J}=\{k+1,\dots,n\}$.
\Ensure $(d_1,\dots,d_n)\in(\F_q^{*})^{n}$ satisfying \eqref{eq:ratios}, i.e.\ $D_{\mathcal{I}}^{-1}A''D_{\mathcal{J}}=F''$ for $D=\diag(d_1,\dots,d_n)$; or \NO{} if no such tuple exists.
\Statex
\State \textbf{if} $\exists (i,j)$ for which $A''_{ij}=0\not\Leftrightarrow F''_{ij}=0$ \textbf{then return} \NO{} 
\State $E\gets\big\{\{i,j\}\ :\ i\in\mathcal{I},\ j\in\mathcal{J},\ A''_{ij}\neq 0\big\}$, \quad $\mathcal{B}\gets\big([n],E\big)$
\State $w_{ij}\gets F''_{ij}\,/\,A''_{ij}$ \quad for each $\{i,j\}\in E$
\State run a breadth-first search on $\mathcal{B}$, obtaining a spanning forest, its roots $r_1,\dots,r_t$ (one per connected component) and, for each non-root $v$, the vertex $u(v)$ from which $v$ was first reached
\State $d_{r_p}\gets 1$ \quad for $p=1,\dots,t$
\For{$v\in[n]\setminus\{r_1,\dots,r_t\}$ in breadth-first order}
  \State $u\gets u(v)$
  \State $d_v\gets\begin{cases}
           w_{uv}\cdot d_u, & u\in\mathcal{I}\ \ (\text{so }v\in\mathcal{J}),\\[2pt]
           d_u\,/\,w_{vu}, & u\in\mathcal{J}\ \ (\text{so }v\in\mathcal{I}).
         \end{cases}$
\EndFor
\State \textbf{if} $\exists (i,j) \in E$ for which $A''_{ij}\,d_j/d_i\neq F''_{ij}$ \textbf{then return} \NO{} \Comment{non-forest edges can fail}
\State \Return $(d_1,\dots,d_n)$
\end{algorithmic}
\end{algorithm}

\begin{algorithm}[htbp]
\caption{\textsc{SearchLCE}}
\label{alg:full}
\begin{algorithmic}[1]
\Require Linearly equivalent $G,F\in\F_q^{k\times n}$ of full row rank $k$; an oracle $\mathcal{O}_\mathrm{LCE}$ for LCE.
\Ensure $S\in\GL_k(\F_q)$, a permutation matrix $P$ and an invertible diagonal $D$ with $SG\,PD=F$; or \NO.
\Statex
\Statex \textit{Preprocessing}
\State $Z_G\gets\{i\in[n]:g_i=0\}$, \quad $Z_F\gets\{j\in[n]:f_j=0\}$
\State $\bar n\gets n-|Z_G|$; \ let $\iota_G:[\bar n]\to[n]\setminus Z_G$ and
       $\iota_F:[\bar n]\to[n]\setminus Z_F$ be the increasing bijections
\State $\bar G\gets$ the columns of $G$ indexed by $[n]\setminus Z_G$, \quad
       $\bar F\gets$ the columns of $F$ indexed by $[n]\setminus Z_F$

\Statex
\Statex \textit{Recovering the permutation.}
\State $\bar\pi\gets\Call{RecoverPermutation}{\bar G,\bar F,\mathcal{O}_\mathrm{LCE}}$
       
\State $\bar P\gets$ the permutation matrix of $\bar\pi$
\Statex
\Statex \textit{Systematic formulation}
\State $A\gets\bar G\bar P$
      
\State $A'\gets\rref(A)$, \quad $\mathcal{I}\gets\piv(A')$
       
\State $F'\gets\rref(\bar F)$
       
\State $\sigma\gets$ the permutation matrix of the permutation of $[\bar n]$
       carrying $\mathcal{I}$ onto $[k]$ and $[\bar n]\setminus\mathcal{I}$ onto
       $\{k+1,\dots,\bar n\}$, both increasingly
\State $A\gets A\sigma$, \ $\bar F\gets\bar F\sigma$, \
       $\big(I_k\mid A''\big)\gets A'\sigma$, \
       $\big(I_k\mid F''\big)\gets F'\sigma$
\State $\mathcal{I}\gets[k]$, \quad $\mathcal{J}\gets\{k+1,\dots,\bar n\}$
       \Comment{now $\bar F=SA\widetilde D$,
                $\widetilde D=\sigma^{-1}\bar D\sigma$}
\Statex
\Statex \textit{Recovering the diagonal and change of basis}
\State $(d_1,\dots,d_{\bar n})\gets\Call{DiagonalEquivalence}{A'',F''}$
       
\State $\widetilde D\gets\diag(d_1,\dots,d_{\bar n})$
\State $S\gets\bar F_{\mathcal{I}}\,\widetilde D_{\mathcal{I}}^{-1}\,
       A_{\mathcal{I}}^{-1}$
      
\State $\bar D\gets\sigma\,\widetilde D\,\sigma^{-1}$
       
\Statex
\Statex \textit{Postprocessing}
\State $\beta\gets$ any bijection $Z_G\to Z_F$
\State $\pi\gets\big(\iota_F\circ\bar\pi\circ\iota_G^{-1}\big)\cup\beta$;
       \quad $P\gets$ the permutation matrix of $\pi$
\State $D\gets\diag(d^{*}_1,\dots,d^{*}_n)$, \quad where
       $d^{*}_j=\bar D_{\iota_F^{-1}(j)\,\iota_F^{-1}(j)}$ for $j\notin Z_F$ and
       $d^{*}_j=1$ for $j\in Z_F$
\State \Return $(S,P,D)$
\end{algorithmic}
\end{algorithm}

\subsection{Cost}

All costs below are counted in field operations in $\F_q$, each of which is $\mathrm{polylog}(q)$ bit operations. By the count of \cite[\S 5, Table 1]{EngSch} they require
\[
6n-2t \ \text{ storage operations}, \qquad n-t \ \text{ multiplications or divisions},
\]
and a total number of logical operations bounded above by $2n+t+n^{2}\le 3n+n^{2}$, as $t\le n$. Other costs include $O(nk)$ to form $A=GP$, $O(nk^{2})$ to find $\mathcal{I}$ and compute the systematic forms $A'$ and $F'$, and $O(k^{\omega})$ to recover $S$ from $D$, for a total of $O(nk^{2}+k^{\omega})$ field operations. In particular the recovery of $D$ is negligible compared to the $O(n^{2})$ oracle queries of the previous section, and the whole of it is deterministic and polynomial in $n$, $k$ and $\log q$.

\begin{gce}
\section{The general code equivalence problem}\label{sec:general}

We now prove Theorem \ref{thm:general}. Note that nothing in Section \ref{sec:main} needs to be redone. This is because of Lemma \ref{lem:frobclass}, which says a field automorphism leaves the partition into projective classes untouched. Thus, we can apply the first step to the pair of matrices $\big(\alpha(G),F\big)$ for an $\alpha$ for which an equivalence exists. 

Throughout, $G,F\in\F_q^{k\times n}$ have full row rank $k$ and no zero column, and we assume, having spent one oracle call, that $(G,F)$ is a \YES{} instance of GCE. By Lemma \ref{lem:frobclass} and Lemma \ref{lem:invariance} applied to $\big(\alpha(G),F\big)$ we may still set $m:=m(G)=m(F)$, as this value does not depend on $\alpha$.

\subsection{Recovering the permutation}

The induction hypothesis is now as follows:

\medskip
\noindent{There exist $S_{T-1}\in\GL_k(\F_q)$, $M_{T-1}\in\Mon_n$ and $\alpha_{T-1}\in\Aut(\F_q)$ such that}
\[
S_{T-1}\,\alpha_{T-1}(G)\,M_{T-1}=F,\qquad \pi_{M_{T-1}}(t)=i_t\ \text{ for all } t\in A_{T-1}.
\]

\medskip

We note that $G''_T$ and $F''_T$ depend on the classes $K_j\in\K(G)$, the value of $m$, and the classes $L_j=$ class of $f_{i_{t_j}}$ in $\K(F)$, and these are all independent of $\alpha$. Hence the target $\widehat L_1,\dots,\widehat L_\ell,\widehat L$ appearing in Proposition \ref{prop:step} are the same regardless, and Algorithm \ref{alg:BQ} may be run on $(G,F)$ unchanged. For completeness, we prove the following.

\begin{proposition}\label{prop:stepgen}
Assume the induction hypothesis. Let $t\notin A_{T-1}$ have unrepresented class, and let $G''_T,F''_T$ be as constructed before Proposition \ref{prop:step}. The answer of the GCE oracle on $\big(G''_T,F''_T\big)$ decides whether there exist $S_T$, $M_T$ and $\alpha_T$ with $S_T\,\alpha_T(G)\,M_T=F$, $\pi_{M_T}|_{A_{T-1}}=\pi_{M_{T-1}}|_{A_{T-1}}$ and $\pi_{M_T}(t)=i_t$.
\end{proposition}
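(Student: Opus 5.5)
The proof has two directions, mirroring Proposition \ref{prop:step}. The plan is to fix the automorphism first and then reduce to the linear statement for the pair $\big(\alpha(G),F\big)$. The key observation is that, by Lemma \ref{lem:frobclass}, $\alpha\big(G''_T\big)$ is obtained from $\alpha(G)$ by exactly the construction of Algorithm \ref{alg:BQ}. Indeed, the appended columns of $\alpha(G''_T)$ are copies of $\alpha(g_{t_j})$ and $\alpha(g_t)$, which are columns of $\alpha(G)$. Moreover, $\K(\alpha(G))=\K(G)$ and $m(\alpha(G))=m$.

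\emph{Sufficiency of an extension.} Suppose $S_T\,\alpha_T(G)\,M_T=F$, with $\pi_{M_T}$ agreeing with $\pi_{M_{T-1}}$ on $A_{T-1}$ and $\pi_{M_T}(t)=i_t$. I apply the column-by-column construction from the first half of the proof of Proposition \ref{prop:step} to the pair $\big(\alpha_T(G),F\big)$. The permutation is $\pi_{M_T}$ on $[n]$ and the identity on appended positions. On $B_j$ the scalar is $d_{\pi_{M_{T-1}}(t_j)}=d_{i_{t_j}}$, and on $B$ it is $d_{i_t}$. This gives $S_T\,\alpha_T(G)\,\|\,\dots\, M''_T=F''_T$, and the left matrix is $\alpha_T(G''_T)$ by Lemma \ref{lem:frobclass}. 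Note that here one uses that $M_T$ agrees with $\pi_{M_{T-1}}$ on the representatives, so the appended columns match. Hence the oracle must answer \YES.

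\emph{A positive answer yields an extension.} Suppose instead that the oracle answers \YES, so that $S''\,\alpha(G''_T)\,M''=F''_T$ for some $\alpha$. Set $H:=\alpha(G)$. Then $\alpha(G''_T)$ is exactly the query matrix built from $(H,F)$ with the same index data. The class sizes of $H''$ coincide with those of $G''_T$ by Lemma \ref{lem:frobclass}, so the counting argument of Proposition \ref{prop:step} goes through verbatim for the linear equivalence $\big(\alpha(G''_T),F''_T\big)$. This argument shows that $f_{i_t}$ is not proportional to any $f_{i_{t_j}}$, that $c=c'$, and that $\pi''$ maps $\widehat K_j$ onto $\widehat L_j$ and $\widehat K$ onto $\widehat L$.

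Next I apply Lemma \ref{lem:normalise} to the \emph{linear} equivalence between $\alpha(G''_T)$ and $F''_T$, with the same bijections $\sigma_j,\sigma$ as before. This is legitimate, since the lemma only concerns linear equivalence of a fixed pair. The step requiring care is the use of Lemma \ref{lem:consistency}: the bijection $\sigma_j$ must agree with $\pi_{M_{T-1}}$ on $A_{T-1}\cap K_j$ and map into $L_j$. Lemma \ref{lem:consistency} was proved for the equivalence $(G,F)$, whereas here $M_{T-1}$ realises $\big(\alpha_{T-1}(G),F\big)$. Its proof only uses Lemma \ref{lem:invariance}, and $\K(\alpha_{T-1}(G))=\K(G)$, so it holds unchanged. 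The classes $L_j$ are also intrinsic to $F$ and the committed values $i_{t_j}$, hence independent of $\alpha_{T-1}$ and of $\alpha$.

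Finally, the block-diagonal argument restricts $M''$ to $[n]$ and gives $S''\alpha(G)M_T=F$ with the required prescriptions. I then set $\alpha_T:=\alpha$. I expect the main obstacle to be this bookkeeping: checking that the automorphism found by the oracle may differ from $\alpha_{T-1}$ without breaking the constraints on $A_{T-1}$. Those constraints are purely combinatorial statements about indices and classes, which are $\alpha$-invariant, so the argument should close.
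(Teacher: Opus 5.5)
Your proposal is correct and takes essentially the same route as the paper: fix the automorphism, use Lemma \ref{lem:frobclass} to identify $\alpha(G''_T)$ with the query built from $\alpha(G)$, run both halves of the proof of Proposition \ref{prop:step} on the resulting linear equivalence, and take $\alpha_T$ to be the automorphism in the oracle's witness. You also check explicitly that Lemma \ref{lem:consistency} still holds when $M_{T-1}$ realises $\big(\alpha_{T-1}(G),F\big)$ rather than $(G,F)$, a detail the paper leaves implicit, and you handle it correctly.
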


\begin{proof}
Suppose first that such a triple exists. By Lemma \ref{lem:frobclass}, $\alpha_T(G''_T)=\alpha_T(G)''_T$. We apply the same construction of $M''_T$ as in item (i) of the proof of Proposition \ref{prop:step}, carried out for the pair $\big(\alpha_T(G),F\big)$. This gives us $S_T\,\alpha_T(G''_T)\,M''_T=F''_T$, so the oracle answers positively.

For the other direction, if the oracle answers positively, let $S''$, $M''$ and $\alpha''$ be such that $S''\,\alpha''(G''_T)\,M''=F''_T$. Since we have $\K\big(\alpha''(G''_T)\big)=\K\big(G''_T\big)$ by Lemma \ref{lem:frobclass}, every class of $\alpha''(G''_T)$ has the same size as the corresponding class of $G''_T$. The pair $\big(\alpha''(G''_T),F''_T\big)$ satisfies a linear equivalence, so Lemma \ref{lem:normalise} can be applied to it, and so too can the remainder of the proof of Proposition \ref{prop:step} with $G''_T$ replaced by $\alpha''(G''_T)$ and $G$ by $\alpha''(G)$. It produces $M_T$ with $S''\,\alpha''(G)\,M_T=F$, agreeing with $\pi_{M_{T-1}}$ on $A_{T-1}$ and sending $t$ to $i_t$. Setting $S_T:=S''$ and $\alpha_T:=\alpha''$ concludes the proof.
\end{proof}

Note that for a represented class, which uses Lemma \ref{lem:normalise} on $\big(\alpha_{T-1}(G),F\big)$, we use no oracle call. Thus, we have the following.

\begin{corollary}\label{cor:permgen}
Algorithm \ref{alg:perm}, run with a GCE oracle in place of $\mathcal{O}_\mathrm{LCE}$, terminates and returns $\pi\in\Sn$ for which there exist $S\in\GL_k(\F_q)$, $M\in\Mon_n$ and $\alpha\in\Aut(\F_q)$ with $S\,\alpha(G)\,M=F$ and $\pi_M=\pi$. It performs at most $n^2$ oracle calls, on instances of dimension $k$ and length at most $n+\frac{2}{27}n^3+O(n^2)$.
\end{corollary}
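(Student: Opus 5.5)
The plan is to prove Corollary \ref{cor:permgen} by induction on $T$, exactly as for Corollary \ref{cor:cost}, with the GCE induction hypothesis stated above as the invariant. The base case $T=1$ has $A_0=\emptyset$ and $\ell=0$. The invariant then holds because $(G,F)$ is a \YES{} instance of GCE, so some triple $(S_0,M_0,\alpha_0)$ exists. Note also that Proposition \ref{prop:base} is simply the case $\ell=0$ of Proposition \ref{prop:step}, so I will treat it through Proposition \ref{prop:stepgen} rather than separately.

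For the inductive step, fix the triple $(S_{T-1},M_{T-1},\alpha_{T-1})$ given by the hypothesis and distinguish the two branches of Algorithm \ref{alg:perm}.

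\emph{Represented class.} Here $g_t\sim g_{t_j}$. By Lemma \ref{lem:frobclass}, this is equivalent to $\alpha_{T-1}(g_t)\sim\alpha_{T-1}(g_{t_j})$, so $\K(\alpha_{T-1}(G))=\K(G)$. Lemmas \ref{lem:invariance}, \ref{lem:consistency} and \ref{lem:normalise} apply verbatim to the linearly equivalent pair $(\alpha_{T-1}(G),F)$. The classes $L_j$ depend only on $F$ and on the committed values $i_{t_j}$, so they are exactly what the algorithm computes. The choice $i_t=\min(L_j\setminus\pi(A_{T-1}))$ is therefore realised by some $M_T$ with $S_{T-1}\alpha_{T-1}(G)M_T=F$, and we keep $\alpha_T:=\alpha_{T-1}$. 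This branch makes no oracle call.

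\emph{Unrepresented class.} By Proposition \ref{prop:stepgen}, each oracle answer is exactly the predicate ``the partial permutation extends with $t\mapsto i$''. At least one candidate $i$ answers \YES{}, namely $i=\pi_{M_{T-1}}(t)$. This value lies in $[n]\setminus\pi(A_{T-1})$ because $\pi_{M_{T-1}}$ is injective, so it is among the candidates scanned. Hence the loop stops at the first \YES{} and produces a new triple $(S_T,M_T,\alpha_T)$ satisfying the invariant with $A_T=A_{T-1}\cup\{t\}$.

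The one point I expect to need care is that the automorphism may change from step to step. One must check that nothing earlier depends on it: the representatives, the classes $K_j$ and $L_j$, the integer $m$, and the committed values $i_t$ are all $\alpha$-free. The hypothesis only asserts the existence of some $\alpha_{T-1}$, and Proposition \ref{prop:stepgen} only needs agreement on $A_{T-1}$. So switching to $\alpha''$ is harmless, and the invariant is correct as stated. After $n$ steps $A_n=[n]$, and the final triple gives $S\,\alpha(G)\,M=F$ with $\pi_M=\pi$.

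Finally, the cost count of Corollary \ref{cor:cost} carries over unchanged. The queries built by Algorithm \ref{alg:BQ} are the same matrices as in the LCE setting, and by Lemma \ref{lem:frobclass} the class structure of $G$, hence $c=|\K(G)|$ and $m$, is the same as there. So there are at most $cn\le n^2$ calls, each on an instance of length at most $n+\frac{2}{27}n^3+O(n^2)$ with $k$ rows. Full row rank is preserved because we only append copies of existing columns.
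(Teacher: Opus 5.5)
Your proposal is correct and follows the paper's argument. The paper likewise keeps the GCE induction hypothesis, letting the automorphism change between steps. It handles represented classes by Lemma \ref{lem:normalise} on $(\alpha_{T-1}(G),F)$ with no oracle call, handles unrepresented classes by Proposition \ref{prop:stepgen}, and takes the call count and query length from Corollary \ref{cor:cost}. Your explicit observation that $i=\pi_{M_{T-1}}(t)$ is always among the scanned candidates, so the loop must reach a \YES{}, fills in a termination step the paper leaves implicit.
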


\subsection{Recovering the automorphism}

Let $\pi$ be the output of Corollary \ref{cor:permgen} and $P$ its permutation matrix. For each of the $\log_p q$ automorphisms $\alpha\in\Aut(\F_q)$, we run the process of Section \ref{sec:diagonal} on the pair $\big(\alpha(G),F\big)$. For the correct $\alpha$, this returns $S$ and $D$ with $S\,\alpha(G)\,PD=F$. For an incorrect one the pair is a \NO{} instance of LCE, and we can check for that using the following:
\begin{enumerate}
\item $\mathcal{I}=\piv\big(\rref(\alpha(G)P)\big)$ need not be an information set for $F$, thus we reject $\alpha$ if $\det F_{\mathcal{I}}=0$.
\item We may not have a case of diagonal equivalence, in which case there are two tests inside Algorithm \ref{alg:diag}. First, we check the supports of $A''$ and $F''$. Then we check that the ratios \eqref{eq:weight} are consistent for all the non-forest edges, the forest edges holding by construction. If either of these checks fail, Algorithm \ref{alg:diag} returns \NO{} and $\alpha$ is rejected.
\end{enumerate}
Note that these checks are both necessary and sufficient. Indeed if they have the same information set, then we can calculate $F_{\mathcal{I}}^{-1}F$, and we can apply Proposition \ref{prop:reduce}. If the second check passes, then the tuple returned lies in $(\F_q^{*})^n$ and satisfies $D_{\mathcal{I}}^{-1}A''D_{\mathcal{J}} =F''$, which by Proposition \ref{prop:reduce} is equivalent to $S\,\alpha(G)\,PD=F$ for $S:=F_{\mathcal{I}}D_{\mathcal{I}}^{-1} A_{\mathcal{I}}^{-1}$. The procedure is summarised in Algorithm \ref{alg:gce}.

\begin{algorithm}[htbp]
\caption{\textsc{SearchGCE}}
\label{alg:gce}
\begin{algorithmic}[1]
\Require $G,F\in\F_q^{k\times n}$ of full row rank $k$; an oracle $\mathcal{O}_{\mathrm{GCE}}$ for GCE.
\Ensure $S\in\GL_k(\F_q)$, a permutation matrix $P$, an invertible diagonal $D$ and $\alpha\in\Aut(\F_q)$ with $S\,\alpha(G)\,PD=F$; or \NO.
\Statex
\Statex \textit{Preprocessing}
\State $Z_G\gets\{i\in[n]:g_i=0\}$, \quad $Z_F\gets\{j\in[n]:f_j=0\}$
\State $\bar n\gets n-|Z_G|$; \ let $\iota_G:[\bar n]\to[n]\setminus Z_G$ and $\iota_F:[\bar n]\to[n]\setminus Z_F$ be the increasing bijections
\State $\bar G\gets$ the columns of $G$ indexed by $[n]\setminus Z_G$, \quad $\bar F\gets$ the columns of $F$ indexed by $[n]\setminus Z_F$
\If{$\mathcal{O}_{\mathrm{GCE}}(\bar G,\bar F)=\NO$} \Return \NO \EndIf
\Statex
\Statex \textit{Recovering the permutation.}
\State $\bar\pi\gets\Call{RecoverPermutation}{\bar G,\bar F, \mathcal{O}_{\mathrm{GCE}}}$
\State $\bar P\gets$ the permutation matrix of $\bar\pi$
\Statex
\Statex \textit{Recovering the automorphism, diagonal, and change of basis}
\For{$s=0,1,\dots,e-1$ \textbf{where} $q=p^{e}$}
  \State $\alpha\gets\big(x\mapsto x^{p^{s}}\big)$
  \Statex
  \Statex \quad\ \textit{Systematic formulation}
  \State $A\gets\alpha(\bar G)\,\bar P$
  \State $A'\gets\rref(A)$, \quad $\mathcal{I}\gets\piv(A')$
  \If{$\det\bar F_{\mathcal{I}}=0$} \textbf{continue} \EndIf 
  \State $F'\gets\bar F_{\mathcal{I}}^{-1}\bar F$
  \State $\sigma\gets$ the permutation matrix of the permutation of $[\bar n]$ carrying $\mathcal{I}$ onto $[k]$ and $[\bar n]\setminus\mathcal{I}$ onto $\{k+1,\dots,\bar n\}$, both increasingly
  \State $A\gets A\sigma$, \ $\bar F\gets\bar F\sigma$, \ $\big(I_k\mid A''\big)\gets A'\sigma$, \ $\big(I_k\mid F''\big)\gets F'\sigma$
  \State $\mathcal{I}\gets[k]$, \quad $\mathcal{J}\gets\{k+1,\dots,\bar n\}$ \Comment{now $\bar F=S A \widetilde D$, $\widetilde D=\sigma^{-1}\bar D\sigma$}
  \Statex
  \Statex \quad\ \textit{Recovering the diagonal and change of basis}
  \State $\mathbf{d}\gets\Call{DiagonalEquivalence}{A'',F''}$
  \If{$\mathbf{d}=$\NO{}} \textbf{continue} \EndIf 
  \State $\widetilde D\gets\diag(\mathbf{d})$
  \State $S\gets\bar F_{\mathcal{I}}\,\widetilde D_{\mathcal{I}}^{-1}\, A_{\mathcal{I}}^{-1}$
  \State $\bar D\gets\sigma\,\widetilde D\,\sigma^{-1}$
  \Statex
    \Statex \qquad \textit{Postprocessing}
    \State $\beta\gets$ any bijection $Z_G\to Z_F$
    \State $\pi\gets\big(\iota_F\circ\bar\pi\circ\iota_G^{-1}\big)\cup\beta$; \quad $P\gets$ the permutation matrix of $\pi$
    \State $D\gets\diag(d^{*}_1,\dots,d^{*}_n)$, \quad where $d^{*}_j=\bar D_{\iota_F^{-1}(j)\,\iota_F^{-1}(j)}$ for $j\notin Z_F$ and $d^{*}_j=1$ for $j\in Z_F$
    \State \Return $(S,P,D,\alpha)$
\EndFor
\State \Return \NO
\end{algorithmic}
\end{algorithm}

\subsection{Cost}

Let $q=p^{e}$, so that $e=\log_p q$. Algorithm \ref{alg:diag} is executed at most $e$ times, each iteration costing $O(nk^2+k^{\omega})$ field operations by the count of Section \ref{sec:diagonal}. For computing $\alpha(\bar G)$, it costs $O(e^2)$ operations in $\F_p$ per entry, that is $O(nke^2)$ operations in $\F_p$ in total. Every operation in $\F_q$ is $\mathrm{polylog}(q)$ bit operations. Hence, the whole reduction runs in time polynomial in $n$, $k$ and $\log q$.
\end{gce}

\section{Conclusion and Future Work}\label{sec:conclusion}

We have presented efficient search-to-decision reductions for the Linear and Generalised Code Equivalence problems. As a direct result of our work, we now have a constructive method for converting any decision LCE (or GCE) solver into a search solver in polynomial time.

What we don't show is a reduction from search LCE to GCE (or search GCE). Indeed, the naive strategy of enumerating $\alpha$ and asking if $\big(\alpha(G),F\big)$ is an instance of LCE gives us a reduction from search or decision GCE to search or decision LCE.

Finally, while we now have settled the search-to-decision reductions for the Hamming metric CE variants (PCE, LCE, and GCE), an analogous search-to-decision reduction for Matrix Code Equivalence (MCE), which forms the basis of MEDS \cite{MEDS}, is still missing. These remain interesting open problems for future research.

\bibliographystyle{plain} 
\bibliography{references}

@INPROCEEDINGS{BM23,
  author={Biasse, Jean-François and Micheli, Giacomo},
  booktitle={2023 IEEE International Symposium on Information Theory (ISIT)}, 
  title={A Search-to-Decision Reduction for the Permutation Code Equivalence Problem}, 
  year={2023},
  volume={},
  number={},
  pages={602-607},
  note={\url{https://doi.org/10.1109/ISIT54713.2023.10206940}}}

@incollection{LESS20,
  author       = {Biasse, Jean-Fran\c{c}ois and Micheli, Giacomo and Persichetti, Edoardo and Santini, Paolo},
  title        = {{LESS} is more: code-based signatures without syndromes},
  booktitle    = {Progress in Cryptology -- AFRICACRYPT 2020},
  series       = {LNCS},
  volume       = {12174},
  publisher    = {Springer},
  year         = {2020},
  pages        = {45--65},
  note = {\url{https://doi.org/10.1007/978-3-030-51938-4_3}}
}

@ARTICLE{BMS25,
  author={Battagliola, Michele and Mora, Rocco and Santini, Paolo},
  journal={IEEE Transactions on Information Theory}, 
  title={Using the Schur Product to Solve the Code Equivalence Problem}, 
  year={2026},
  volume={72},
  number={7},
  pages={4675-4694},
  note={\url{https://doi.org/10.1109/TIT.2026.3694630}}}

@article{CPS25,
  author       = {Chou, Tung and Persichetti, Edoardo and Santini, Paolo},
  title        = {On linear equivalence, canonical forms, and digital signatures},
  journal      = {Designs, Codes and Cryptography},
  volume       = {93},
  number       = {7},
  pages        = {2415--2457},
  year         = {2025},
  note = {\url{https://doi.org/10.1007/s10623-025-01576-1}},
 
}

@Article{DMS26,
author={D'Alconzo, Giuseppe
and Meneghetti, Alessio
and Signorini, Edoardo},
title={Group factorisation for smaller signatures from cryptographic group actions},
journal={Designs, Codes and Cryptography},
year={2026},
month={Feb},
day={02},
volume={94},
number={2},
pages={42},
issn={1573-7586},
note={\url{https://doi.org/10.1007/s10623-025-01787-6}}
}

@article{EngSch,
  author       = {Engel, G. M. and Schneider, H.},
  title        = {Algorithms for testing the diagonal similarity of matrices and related problems},
  journal      = {SIAM Journal on Algebraic Discrete Methods},
  volume       = {3},
  number       = {4},
  pages        = {429--438},
  year         = {1982},
  note = {\url{https://doi.org/10.1137/0603044}}
}

@misc{AD26,
  author       = {Alecci, M. and D'Alconzo, G.},
  title        = {Linear code equivalence via Pl\"ucker coordinates},
  year         = {2026},
  note = {arXiv:2603.09869 \url{https://doi.org/10.48550/arXiv.2603.09869}}
}

@inproceedings{PS23,
author = {Persichetti, Edoardo and Santini, Paolo},
title = {A New Formulation of the Linear Equivalence Problem and Shorter LESS Signatures},
year = {2023},
isbn = {978-981-99-8738-2},
publisher = {Springer-Verlag},
address = {Berlin, Heidelberg},
note = {\url{https://doi.org/10.1007/978-981-99-8739-9_12}},
doi = {10.1007/978-981-99-8739-9_12},
booktitle = {Advances in Cryptology – ASIACRYPT 2023: 29th International Conference on the Theory and Application of Cryptology and Information Security, Guangzhou, China, December 4–8, 2023, Proceedings, Part VII},
pages = {351–378},
numpages = {28},
location = {Guangzhou, China}
}

@InProceedings{MEDS,
author="Chou, Tung
and Niederhagen, Ruben
and Persichetti, Edoardo
and Randrianarisoa, Tovohery Hajatiana
and Reijnders, Krijn
and Samardjiska, Simona
and Trimoska, Monika",
editor="El Mrabet, Nadia
and De Feo, Luca
and Duquesne, Sylvain",
title="Take Your MEDS: Digital Signatures from Matrix Code Equivalence",
booktitle="Progress in Cryptology - AFRICACRYPT 2023",
year="2023",
publisher="Springer Nature Switzerland",
address="Cham",
pages="28--52",
isbn="978-3-031-37679-5",
note = {\url{https://doi.org/10.1007/978-3-031-37679-5_2}}
}

@InProceedings{survey24,
author="Horlemann, Anna-Lena
and Mazumder, Abhinaba
and Schaller, Michael
and Weger, Violetta",
editor="Batina, Lejla
and {\"O}zbudak, Ferruh",
title="A Survey on Code Equivalence: The State-of-the-Art and Open Questions",
booktitle="Arithmetic of Finite Fields",
year="2026",
publisher="Springer Nature Switzerland",
note={\url{https://doi.org/10.1007/978-3-032-27574-5_2}},
address="Cham",
pages="12--31",
isbn="978-3-032-27574-5"
}

\end{document}